\renewcommand\footnotetextcopyrightpermission[1]{} 
\newcommand{\Set}[1]{\{#1\}}
\newcommand{\cgreencolorA}[2]{\text{\textcolor{green}{$\mathcal{I}_{#1}^{#2}$}}}
\newcommand{\cbluecolorA}[2]{\text{\textcolor{blue}{$\mathcal{A}_{#1}^{#2}$}}}
\newcommand{\credcolorA}[2]{\text{\textcolor{red}{$\mathcal{A}_{#1}^{#2}$}}}
\newcommand{\rgreencolorA}[2]{\text{\textcolor{green}{$\mathcal{A}_{#1}^{#2}$}}}
\newcommand{\rbluecolorA}[2]{\text{\textcolor{blue}{$\mathcal{A}_{#1}^{#2}$}}}
\newcommand{\rredcolorA}[2]{\text{\textcolor{red}{$\mathcal{A}_{#1}^{#2}$}}}
\newcommand{\rgreencolorI}[2]{\text{\textcolor{green}{$\mathcal{I}_{#1}^{#2}$}}}
\newcommand{\rredcolorI}[2]{\text{\textcolor{red}{$\mathcal{I}_{#1}^{#2}$}}}
\newtheorem*{claim}{Claim}
\newtheorem{remark}{Remark}
\newcommand\restr[2]{\ensuremath{\left.#1\right|_{#2}}}
  \providecommand\BibTeX{{%
    \normalfont B\kern-0.5em{\scshape i\kern-0.25em b}\kern-0.8em\TeX}}}
\begin{document}

\title{Continuous Pushdown VASS in One Dimension are Easy}

\author{Guillermo A. P\'erez}
\email{guillermo.perez@uantwerpen.be}
\orcid{0000-0002-1200-4952}
\author{Shrisha Rao}
\orcid{0000-0001-5559-7287}
\email{shrisha.rao@uantwerpen.be}
\affiliation{%
  \institution{University of Antwerp}
  \streetaddress{Middelheimlaan 1}
  \city{Antwerp}
  \country{Belgium}
  \postcode{2020}
}



\begin{abstract}
    A pushdown vector addition system with states (PVASS) extends the model of vector addition systems with a pushdown stack. The algorithmic analysis of PVASS has applications such as static analysis of recursive programs manipulating integer variables. Unfortunately, reachability analysis, even for one-dimensional PVASS is not known to be decidable. We relax the model of one-dimensional PVASS to make the counter updates continuous and show that in this case reachability, coverability, and boundedness are decidable in polynomial time. In addition, for the extension of the model with lower-bound guards on the states, we show that coverability and reachability are in NP, and boundedness is in coNP.
\end{abstract}

\begin{CCSXML}
<ccs2012>
   <concept>
       <concept_id>10003752.10003766.10003771</concept_id>
       <concept_desc>Theory of computation~Grammars and context-free languages</concept_desc>
       <concept_significance>500</concept_significance>
       </concept>
   <concept>
       <concept_id>10003752.10003753.10003761</concept_id>
       <concept_desc>Theory of computation~Concurrency</concept_desc>
       <concept_significance>500</concept_significance>
       </concept>
 </ccs2012>
\end{CCSXML}

\ccsdesc[500]{Theory of computation~Grammars and context-free languages}
\ccsdesc[500]{Theory of computation~Concurrency}

\keywords{Vector addition systems, Pushdown automata, Reachability, Coverability, Boundedness,
Complexity}




\maketitle

\section{Introduction}
Vector addition systems with states (VASS) are commonly used to model
distributed systems and concurrent systems with integer variables. A VASS
consists of a set of (control) states and a set of counters. Transitions
between states are labelled with vectors of integers (usually encoded in
binary) that are added to the current values of the counters. Importantly,
transitions that would result in a counter value becoming negative are
disallowed. 

An equivalent way of understanding the model is to see the
counters as unary-alphabet stacks. This alternative formulation has a natural
extension obtained by adding one general stack (i.e. its alphabet is not a
singleton) to it. Pushdown VASS (PVASS), as they are usually called, can be used to
model recursive programs that manipulate integer variables. Arguably the most
basic question one can attempt to answer algorithmically in a
computational model is that of \emph{reachability}. In the context of
(pushdown) VASS, we ask whether a given target configuration (formed by the
current state and the values of the counters) can be seen along a run from a
given source configuration. While the complexity of reachability for VASS is
now better
understood~\cite{DBLP:conf/lics/LerouxS19,DBLP:journals/jacm/CzerwinskiLLLM21},
for PVASS it is not known to be decidable and the best known lower
bound is \textsc{HyperAck}-hardness~\cite{DBLP:conf/csl/LerouxPS14}. In one dimension, the problem is also not known to be decidable and the
known lower bound is \textsc{Pspace}-hardness~\cite{englert2021lower}.

Motivated by the (complexity) gap in our understanding of reachability for
PVASS, researchers have studied the problem for different relaxations of the
model: A PVASS is \emph{bidirected}~\cite{DBLP:conf/icalp/GanardiMPSZ22} if the effect (on the stack and the
counters) of every transition can be (immediately) reversed; A
$\mathbb{Z}$-PVASS~\cite{DBLP:conf/cav/HagueL11} allows counters to hold negative values; A
\emph{continuous} PVASS~\cite{CPVASS_bala} instead allows them to hold nonnegative values and
counter updates labelling a transition can be scaled by any $\alpha \in (0,1]$
when taking the transition. For all of these, reachability is known to be
decidable. For some of them, lower complexity bounds for the special case of one dimension have also been established. See \Cref{tab:reach} for a summary of known results.

\begin{table*}
  \centering
  \caption{Previously known complexity bounds for the reachability problem in PVASS and
  relaxations thereof}
  \label{tab:reach}
  \begin{tabular}{c|c||c|c|c}
    & PVASS & Bidirected PVASS & $\mathbb{Z}$-PVASS & Continuous PVASS \\
    \hline
    General & \textsc{HyperAck}-hard & ${} \in {}$ \textsc{Ack} &
    \textsc{NP}-complete & \textsc{NEXP}-complete \\
    1 Dimension & \textsc{Pspace}-hard & ${} \in {}$ \textsc{Pspace} &
    \textsc{NP}-complete & ${} \in {}$ \textsc{NEXP}
  \end{tabular}
\end{table*}

In this work, we study reachability, boundedness, and coverability in continuous PVASS in one dimension. The boundedness problem asks whether the set of all reachable configurations, from a given source configuration, is finite. In turn, coverability asks whether a given state can be seen along a run from a given source configuration. In contrast to reachability, coverability is known to be decidable and in \textsc{EXPspace} for PVASS in one dimension~\cite{leroux2015coverability}. Similarly, boundedness is known to be decidable and in \textsc{HyperAck}, this time in general, not only in one dimension~\cite{DBLP:conf/csl/LerouxPS14}.

\paragraph{Contributions.}
In this paper, we prove that, for continuous PVASS in one dimension, reachability, coverability, and boundedness are decidable in \textsc{Ptime}. We further show that if one adds to the model lower-bound guards on the states for the counter (thus allowing for a ``tighter'' relaxation of the original model) then reachability and coverability are in \textsc{NP} while boundedness is in \textsc{coNP}. See \Cref{tab:contrib} for a summary of our contributions.

\begin{table}
    \centering
    \caption{New results for continuous PVASS in one dimension}\label{tab:contrib}
    \begin{tabular}{c|c||c}
         & PVASS & Cont. PVASS (/ with low. bounds) \\
         \hline
         Reach & \textsc{Pspace}-hard & ${}\in{}$ \textsc{Ptime} / \textsc{NP} \\
         Cover & ${} \in {}$ \textsc{EXPspace} & ${}\in{}$ \textsc{Ptime} / \textsc{NP} \\
         Bounded & ${} \in {}$ \textsc{HypAck} & ${}\in{}$ \textsc{Ptime} / \textsc{coNP} \\
    \end{tabular}
\end{table}

\section{Preliminaries}
We first recall a definition of pushdown automata and
then we extend this definition to continuous PVASS.

\subsection{Pushdown automata and Context-free grammars}

\begin{definition}[Pushdown automata]
  A pushdown automaton (PDA, for short) is a tuple
  $\mathcal{P}=(S,\Sigma,\Gamma,\delta,s_0,\bot,F)$ where:
  \begin{itemize}
    \item $S$ is a finite set of states,
    \item $\Sigma$ a finite (possibly empty) alphabet, 
    \item $\Gamma$ a finite stack alphabet, 
    \item $s_0\in S$ the initial state,
    \item $\bot\in\Gamma$ the initial stack symbol, 
    \item $F\subseteq S$ a set of accepting states,
    \item and $\delta:S\times S\rightarrow
      \big(\Sigma\cup\epsilon\big)\times\big(\Set{a,\overline{a}\mid
      a\in\Gamma\backslash\bot}\cup\epsilon\big)$ a partial function,
      where, $a$ and $\overline{a}$ denote pushing and popping $a$ from the
      stack respectively.
  \end{itemize}
\end{definition}

A \textit{configuration} of a PDA $\mathcal{P}$ is of the form $(s,w,\alpha)\in
S\times\Sigma^*\times\Gamma^*$ where $s$ represents the current state of the
PDA, $w$ the word read by the PDA until reaching the state $s$ and $\alpha$
the current stack contents of the PDA (with the right being the ``top'' from
which we pop and onto which we push). The \textit{initial configuration} $q_0$ is $(s_0,\epsilon,\bot)$. 

A \textit{run} of a PDA $\mathcal{P}$ is of the form $\pi=q_0 q_1 \dots q_n$ where
$q_i=(s_i,w_i,\alpha_i)$ is a configuration, for all $0\leq i\leq n$, and the
following hold for
all $0\leq i<n$:
\begin{itemize}
  \item $\delta(s_i,s_{i+1})$ is defined,
  \item $w_{i+1}=w_i\cdot\delta(s_i,s_{i+1})_1$,
  \item $\alpha_{i+1}=\alpha_i$ if
    $\delta(s_i,s_{i+1})_2=\epsilon$,
  \item $\alpha_{i+1}=\alpha_i\cdot a$ if $\delta(s_i,s_{i+1})_2=a$, and
  \item $\alpha_i=\alpha_{i+1}\cdot a$ if $\delta(s_i,s_{i+1})_2=\overline{a}$.
\end{itemize}
Above, $\delta(\_,\_)_i$ represents the $i$-th component of the tuple.
For any $Q \subseteq S$, we say the run reaches $Q$ if $s_n \in Q$.

\subsubsection{Acceptance conditions of a PDA}
There are three classical notions of accepting runs $q_0 \dots q_n$ for PDAs. Below, we recall them writing $q_i =
(s_i,w_i,\alpha_i)$.
\begin{description}
    \item[State reachability] says the run is accepting if $s_n \in
      F$.
    \item[Empty stack] says the run is accepting if $\alpha_n = \bot$,
      i.e. the stack is empty, no matter the state.
    \item[Both] says the run is accepting if and only if both
      conditions above hold true.
\end{description}
Note that $q_0=(s_0,w_0,\alpha_0)=(s_0,\epsilon,\bot)$ means that accepting runs start from the initial configuration.
It is well known that all three acceptance conditions are logspace interreducible (see, e.g. \cite[Supplementary Lecture E]{kozen:automata}). We only focus on state reachability, that is, a run $\pi$ of a PDA is \textit{accepting} if $q_0$ is the initial
configuration and $s_n\in F$. 

The language of a PDA $\mathcal{P}$, denoted by $L(\mathcal{P})$, is the set of all words $w_n \in \Sigma^*$ read by accepting runs $q_0 \dots
(s_n,w_n,\alpha_n)$ of $\mathcal{P}$. The \emph{Parikh image} $\Phi(w)$ of a word $w \in \Sigma^*$, i.e. the vector in $\mathbb{N}^{|\Sigma|}$ such that its $i$\textsuperscript{th} is the number of times the $i$\textsuperscript{th} letter of $\Sigma$ (assuming an arbitrary choice of total order) appears in $w$.

\subsubsection{Context-free grammars} CFGs, for short, are a model that is expressively equivalent to PDAs in terms of their languages. The models are logspace reducible to each other \cite[Section 5.3]{Hopcroft+Ullman/79/Introduction}.

\begin{definition}[Context-free grammars] A CFG is a tuple $G=(V,\Sigma,P,S)$, where $V$ is a set of variables; $\Sigma$, a set of terminals; $P\subset V\times\Set{\rightarrow}\times \Set{V,\Sigma}^*$, a set of productions; and $S\in V$, the start symbol.
\end{definition}

The \textit{production symbol} ``$\rightarrow$'' separates the \textit{head} (a variable) of the production, to the left of $\rightarrow$, from the \textit{body} (a string of variables and terminals) of the production, to the right of $\rightarrow$.
Each variable represents a language, i.e., a (possibly empty) set of strings of terminals. The body of each production represents one way to form strings in the language of the head.

\begin{example}
    The grammar $G=(\Set{A},\Set{a,b},P,S=A)$ represents the set of all palindromes over $\Set{a,b}$ where the productions are:
    \[
    \begin{aligned}
        A & \rightarrow \epsilon,\\
        A & \rightarrow a,
    \end{aligned}
    \quad\quad
    \begin{aligned}
        A & \rightarrow b, \\
        A & \rightarrow aAa,
    \end{aligned}
    \quad\quad
    \begin{aligned}
    A & \rightarrow bAb.\\
    & 
    \end{aligned}
    \]
    The word $abaaaba$, for example, is in the language of $A$ since it can be obtained by $A\rightarrow aAa\rightarrow abAba\rightarrow abaAaba\rightarrow abaaaba$ where the fourth, fifth, again the fourth, and finally, the second production rules are applied, in that order.
\end{example}
    
\textit{Chomsky normal form} (or CNF) \cite[Section 4.5]{Hopcroft+Ullman/79/Introduction} is a normal form
for CFGs with the restriction that all production rules can only be of the form $A \rightarrow BC$, or $A \rightarrow a$, or $S \rightarrow \varepsilon$. Converting a CFG to its Chomsky normal form can lead to at most a cubic explosion in size.

\subsection{Continuous pushdown VASS}
A continuous pushdown vector addition system with states in one dimension is a PDA with
a continuous counter.

\begin{definition}[C1PVASS]
    A continuous pushdown VASS (with lower-bound guards) in one dimension (C1PVASS) is a tuple:
    \[\mathcal{A}=(S,\Sigma,\Gamma,\delta,\bot,s_0,F,\ell)\] where $S$ is a finite set of
    states; $s_0\in S$, the initial state; $F\subseteq S$, a set of accepting
    states; $\Sigma$, a finite alphabet; $\Gamma$, a finite stack alphabet;
    $\bot\in\Gamma$, the initial stack symbol; 
    $\delta:S\times S\rightarrow\big(\Sigma\cup\epsilon\big)\times\mathbb{Z}\times\big(\Set{a,\overline{a}\mid
    a\in\Gamma\backslash\bot}\cup\epsilon\big)$, a partial transition function; and
    $\ell:S\rightarrow\mathbb{N}$, a function that assigns the lower bounds to the states.
\end{definition}

Since we only study runs, and not languages, of C1PVASS, we henceforth omit $\Sigma$. We also assume, without loss of generality, that the set $F$ is a singleton. This can be done by adding a new final state $f'$ to $S$ and adding transitions for all $f\in F$ to $f'$ which read $\epsilon$, have a $+0$ counter update, and do not modify the stack.
With these assumptions, we have a simpler representation of a C1PVASS
\[\mathcal{A}=(S,\Gamma,\delta,\bot,s_0,f,\ell)\]
where $\delta$ is now of the form $\delta:S\times S\rightarrow\mathbb{Z}\times\big(\Set{a,\overline{a}\mid
    a\in\Gamma\backslash\bot}\cup\epsilon\big)$.
    
A \textit{configuration} of a C1PVASS is of the form $(s,\alpha,c)$ where
$s$ and $\alpha$ are as for PDAs, and $c\in\mathbb{R}_{\geq 0}$ is the current
nonnegative value of the counter with the property that $c\geq\ell(s)$, that is, the counter value at a state must be at least the lower bound on that state. The \textit{initial configuration} $q_0$ of
the C1PVASS is $(s_0,\bot,0)$.

A \textit{run} of the C1PVASS $\mathcal{A}$ is a
sequence of configurations $\pi=q_0 q_1 \dots q_n$ with $q_i =
(s_i,\alpha_i,c_i)$ such that $\restr{\pi}{\mathcal{P}_\mathcal{A}}$, obtained by
removing the counter values $c_i$, is a run in the PDA $\mathcal{P}_\mathcal{A}$, obtained by
removing the counter updates from $\mathcal{A}$, and the following holds for all $0\leq
i<n$: $c_{i+1}=c_i+ \gamma \delta(s_i,s_{i+1})_1$ for some $\gamma \in
\mathbb{R} \cap (0,1]$. We call the $\gamma$ \textit{scaling factors}.

\begin{example}
    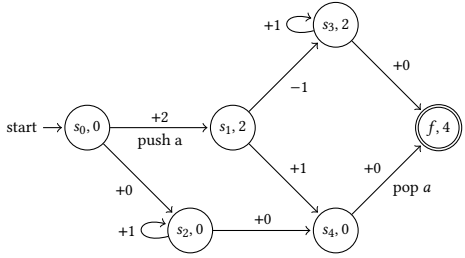
\begin{figure}
        \centering
        \begin{tikzpicture}[shorten >=1pt,auto,node distance=1.9 cm, scale = 0.7, transform shape]

        \node[state,initial](s0)[]{$s_0,0$};
        \node[state](s1)[right=of s0]{$s_1,2$};
        \node[state](s2)[below right=of s0]{$s_2,0$};
        \node[state](s3)[above right=of s1]{$s_3,2$};
        \node[state](s4)[below right=of s1]{$s_4,0$};
        \node[state,accepting](f)[above right=of s4]{$f,4$};

        \path[->]
        (s0)edge [above] node [above] {$+2$}
                        node [below] {\text{push }a}(s1)
            edge [] node [below left] {$+0$}(s2)
        (s1)edge [above] node [below right] {$-1$}(s3)
            edge [above] node [above right] {$+1$}(s4)
        (s2)edge [above] node [above] {$+0$}(s4)
            edge [loop left] node [left] {$+1$}(s1)
        (s3)edge [above] node [above right] {$+0$}(f)
            edge [loop left] node [left] {$+1$}(s3)
        (s4)edge [above] node [above left] {$+0$}
                        node [below right] {pop $a$}(f)
        ;
        \end{tikzpicture}
        \caption{An example of a C1PVASS $\mathcal{A}$.}
        \label{fig:C1PVASS}
    \end{figure}
    \cref{fig:C1PVASS} shows a C1PVASS with $6$ states. The second component of the tuple inside the states denotes the lower bound on that state. For instance, $\ell(s_1)=2$. This C1PVASS does not have any run reaching $f$. This is because the only way to make the counter reach $4$ is via $s_2$ or $s_3$. The run through $s_2$ does not push an $a$ into the stack which has to be popped later in order to reach $f$. Also, $s_3$ cannot be reached, since there are only two updates $+2$ and $-1$ before $s_3$ and $\gamma_1\cdot2+\gamma_2\cdot(-1)<2$ for all $\gamma_1,\gamma_2\in(0,1]$.
\end{example}

\subsubsection{Acceptance conditions of a C1PVASS}
There are two classical notions ways of extending (state reachability) acceptance from runs of a PDA to runs $\pi=q_0 \dots q_n$ of C1PVASS, namely: \emph{reachability} and
\emph{coverability} for $k\in\mathbb{R}_{\geq0}$.
\begin{description}
  \item[$k$-Reachability] says the run is accepting if $\restr{\pi}{P_\mathcal{A}}$ is
    accepting in $P_\mathcal{A}$ and
    $c_n = k$.
  \item[$k$-Coverability] says the run is accepting if $\restr{\pi}{P_\mathcal{A}}$ is
    accepting in $P_\mathcal{A}$ and
    $c_n \geq k$.
\end{description}
Like in PDAs, $q_0=(s_0,\alpha_0,c_0)=(s_0,\bot,0)$ means that accepting runs start with the initial configuration. We refer accepting runs according to the above conditions as $k$-reaching and $k$-covering runs, respectively.

We observe that using state-reachability acceptance for the PDA underlying a C1PVASS is no loss of generality.
\begin{lemma}
    The three acceptance conditions: state reachability, empty stack, and both are logspace interreducible even in combination with $k$-reachability and $k$-coverability for C1PVASS.
\end{lemma}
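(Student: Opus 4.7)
The plan is to adapt the classical PDA interreducibility constructions to C1PVASS and to verify that they remain sound in the presence of a nonnegative real-valued counter with lower-bound guards. Each of the standard constructions only augments the stack alphabet with a fresh bottom marker, introduces a constant number of fresh states, and adds transitions that consume no input and merely push or pop stack symbols. In the C1PVASS setting I annotate every such added transition with counter update $+0$ and assign lower bound $0$ to every fresh state; since all configurations of a C1PVASS carry a nonnegative counter value, these transitions are always enabled whenever the stack allows them to fire, and the counter value at the end of the added prefix/suffix equals its value at the boundary.

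Concretely, I set up a cycle of three reductions. For state-reachability to empty-stack, I introduce a fresh bottom $\bot'$ and a fresh initial state $s_0'$ whose sole transition to $s_0$ pushes $\bot$ (with $+0$); the original transitions are inherited verbatim, and from the accepting state $f$ I add a $+0$ transition to a fresh state $f^*$, together with self-loops at $f^*$ popping each $a \in \Gamma \setminus \{\bot\}$, and finally a $+0$ transition $f^* \to f^{**}$ popping $\bot$. A run of the new machine empties its stack (to the new bottom $\bot'$) exactly when the original reaches $f$, with the same final counter value. For empty-stack to state-reachability, I reuse the initial bookkeeping ($\bot'$ together with $s_0'$ pushing $\bot$), and from every state $s \in S$ I add a $+0$ transition into a fresh accepting state $f^*$ that pops $\bot$; such a pop fires exactly when the simulated stack has reached $\bot$. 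The reduction for ``both'' is obtained by composing these.

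Correctness with respect to $k$-reachability and $k$-coverability then follows immediately: every added transition has update $+0$, so the final counter equals the counter at the last original configuration, matching the acceptance target exactly in the reachability case and with ${}\geq{}$ in the coverability case. Each construction is manifestly logspace, as it only adds a constant number of states, transitions, and alphabet symbols. The only genuine verification is in the empty-stack to state-reachability direction, where one must check that the pop-$\bot$ transitions fire only when the simulated stack has been reduced to $\bot$; this follows because a pop transition requires the matching symbol at the top of the stack and, since original transitions never touch $\bot$ or $\bot'$, the symbol $\bot$ reappears on top precisely when the original simulation has emptied its ``above-$\bot$'' contents.
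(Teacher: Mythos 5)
Your proposal is correct and matches what the paper intends: the paper states this lemma without proof, implicitly appealing to the same classical PDA constructions you spell out, and your observation that all added transitions carry update $+0$ with lower bound $0$ on fresh states (so the final counter value is preserved and no guard is ever violated) is exactly the verification needed. One cosmetic caveat: since the paper's $\delta$ is a partial function on $S\times S$ (at most one transition per ordered pair of states), your ``self-loops at $f^*$ popping each $a\in\Gamma\setminus\{\bot\}$'' must be realized with one auxiliary state per stack symbol rather than parallel loops on a single state, which is still logspace.
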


A final simplifying assumption we make is that all the counter updates in the transition function are in the set $\Set{-1,+0,+1}$. This is also no loss of generality due to the following lemma.

\begin{figure}
\centering

        \begin{tikzpicture}[shorten >=1pt,auto,node distance=1.9 cm, scale = 0.7, transform shape]

        \node[state](u)[]{$u$};
        \node[state](us)[below=1cm of u]{};
        \node[state](uv)[right=1.4cm of us]{};
        \node[state](q)[right=of uv]{$q$};
        \node[state](qi)[right=2.5cm of q]{$q_i$};
        \node[](dots1)[below left=1.8cm of qi]{$\vdots$};
        \node[state](qi')[above=.1cm of dots1]{$q_i'$};
        \node[](dots2)[above=2.5cm of dots1]{$\vdots$};
        \node[state](v)[right=1.5cm of qi]{$v$};
        
        \node (X) [draw=red, fit= (q) (qi) (dots1) (dots2), inner sep=0.1cm, fill=red!20, fill opacity=0.2]{};
        \node[text opacity=0.2] [right=1cm of dots1] {\Huge{$\mathcal{G}_+$}};
    
        \path[->,color=blue] 
        (u)   edge [above] node [right] {stack update} (us)

        (us)   edge [above] node [sloped] {push $\#_v$} (uv)
        (uv)  edge [above] node {push bin$(n)$} (q);
        \path[->,color=red] 
        (q)   edge [loop above] node [above] {pop $a_1$}
                                node [below right] {$+1$} (q)
              edge [bend left] node [below] {pop $a_i$} (qi)
              edge [bend left] node [above] {pop $\#_v$} (v)
        (qi)  edge [bend left] node [sloped,anchor=center,below] {push $a_{i-1}$} (qi')
        (qi') edge [bend left] node [sloped,below] {push $a_{i-1}$} (q)
        ;
        \end{tikzpicture}
    \caption{
    Simulating $n$ many $+1$ updates 
    using simple binary arithmetic.}
    \label{fig:bin_gadget}
\end{figure}
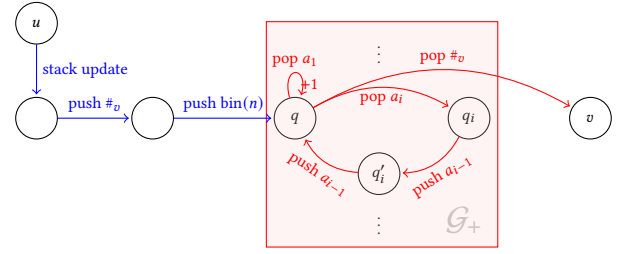

\begin{lemma}\label{lem:3_counter_updates}
    Given a C1PVASS $\mathcal{A}=(S,\Gamma,\delta,\bot,s_0,f,\ell)$, there exists an equivalent C1PVASS\footnote{To be precise: there is a clear relation between their sets of reachable configurations.} with counter updates in the set $\Set{-1,+0,+1}$, which is quadratic in the size of the encoding of $\mathcal{A}$, thus polynomial even if the counter updates are encoded in binary.
\end{lemma}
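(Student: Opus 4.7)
The plan is to replace every transition $(u,v)$ with a large counter update $\pm n$ by a small gadget that simulates the update using only $\pm 1$ ticks, storing the ``remaining magnitude to be added'' in binary on the stack. I would introduce fresh stack symbols $a_1, \ldots, a_k$ with $k = \lceil \log_2(n+1) \rceil$, where each $a_i$ represents value $2^{i-1}$, together with a marker $\#_v$. As shown in \cref{fig:bin_gadget}, the gadget first performs the original stack update, pushes $\#_v$, then pushes the binary encoding of $n$ as a sequence of $a_i$ symbols via a chain of $O(\log n)$ intermediate transitions with $+0$ counter updates; control then enters a sub-gadget $\mathcal{G}_+$ at a state $q$ that nondeterministically either pops $a_1$ with a $+1$ tick or pops $a_i$ with $i>1$ and replaces it by two copies of $a_{i-1}$ via intermediate states $q_i, q_i'$ (a value-preserving rewrite since $2 \cdot 2^{i-2} = 2^{i-1}$). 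The gadget exits to $v$ by popping $\#_v$. The $-n$ case uses a symmetric gadget $\mathcal{G}_-$ performing $-1$ ticks. All auxiliary states receive lower bound $0$.

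For correctness, I would exhibit a correspondence between runs of $\mathcal{A}$ and runs of the new machine, restricted to the original states. In the forward direction, a transition $+n$ scaled by $\gamma \in (0,1]$ is simulated by taking every $+1$ tick inside the gadget with scaling $\gamma$, producing the same total change $\gamma n$. In the backward direction, any completion of the gadget must execute exactly $n$ tick transitions: the integer value $V$ represented by the $a_i$-prefix of the stack starts at $n$, is preserved by every doubling transition, decreases by one per tick, and cannot be bypassed because $\#_v$ lies strictly below all $a_i$'s, so the exit transition is enabled only when $V = 0$; the individual scalings $\gamma_1, \ldots, \gamma_n \in (0,1]$ then match a single original $+n$ transition scaled by $(\sum_i \gamma_i)/n \in (0,1]$. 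Since the counter is monotonic throughout a $\pm n$ gadget, intermediate values remain non-negative whenever the endpoint at $v$ does, and setting auxiliary lower bounds to $0$ imposes no spurious restriction.

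For the size bound, each original transition contributes $O(\log n)$ states and transitions to push bin$(n)$, plus a copy of the sub-gadget $\mathcal{G}_\pm$ of $O(\log N)$ states where $N$ is the maximum magnitude of an update; the stack alphabet grows by $O(\log N) + |S|$ new symbols (for the $a_i$'s and the per-target markers $\#_v$). Summing over all transitions yields a machine whose size is at most quadratic in the binary encoding of $\mathcal{A}$. The main obstacle I anticipate is the invariant-based termination argument for $\mathcal{G}_+$ in the backward direction: one must carefully verify that under the nondeterministic choice of ticks versus doublings, no partial execution of a gadget can be confused with a genuine run of the new machine ending in the middle of the simulation, and that the stack invariant is preserved across the scaling chosen independently at each tick. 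Once these bookkeeping points are settled, the simulation is faithful under continuous semantics precisely because the convex combinations $\sum_i \gamma_i / n$ range over the same interval $(0,1]$ as the original scaling factor.
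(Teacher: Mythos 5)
Your proposal is correct and follows essentially the same route as the paper: the same binary-encoded stack gadget with symbols $a_i$ worth $2^{i-1}$, the marker $\#_v$, the doubling rewrite $a_i \mapsto a_{i-1}a_{i-1}$, one $+1$ (resp.\ $-1$) tick per pop of $a_1$, and the same quadratic size accounting. You in fact supply more of the correctness argument (the value invariant and the matching of per-tick scalings to a single averaged scaling factor $(\sum_i\gamma_i)/n\in(0,1]$) than the paper's proof sketch does.
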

\begin{proof}[Proof sketch]
    Let $W=\max\Set{|\delta(p,q)_1|: p,q\in S}$ be the largest absolute counter update and $w=\lceil\log_2W\rceil$. The new C1PVASS is $\mathcal{A'}=(S',\Gamma',\delta',\bot,s_0,f,\ell')$, where $\Gamma'$ is obtained by adding new symbols: $a_1,\dots,a_w$ and $\#_s$, for all $s\in S$, to the stack alphabet $\Gamma$. For all new states $s\in S'\backslash S$, $\ell(s)=0$. There are two gadgets, $\mathcal{G}_+$ (depicted as the red box in \cref{fig:bin_gadget}) and $\mathcal{G}_-$ (defined similarly).

    Whenever $\mathcal{A}$ reads a positive update, say, $+n$ on a transition from the state $u$ to $v$, $\mathcal{A'}$ first does the same stack update from $u$ as the transition in $\mathcal{A}$, pushes $\#_v$ (remembering that the next state seen in $S$ must be $v$), and finally pushes the binary encoding of $n$ using the new characters $a_1,\dots,a_w$ to the stack. (For example, the binary encoding of $13$ would be $a_1a_3a_4$.) Inside the gadget, two copies of $a_{i-1}$ are pushed back onto the stack for every copy of $a_i$ being popped, for all $1<i\leq n$. Thus, eventually, there will be exactly $n$ pops of $a_1$, and each time $a_1$ is popped, there is also one $+1$ counter update. Finally, $\#_v$ can be popped and the run enters $v$.
    The gadget $\mathcal{G}_-$ works similarly, except for the fact that it does a $-1$ update on the pop($a_1$) loop on $q$ instead of $+1$.

    Now, we look at the size of $\mathcal{A}'$. Before entering the gadget, at most $w+1$ states are needed to do the stack update, push $\#_v$ and push the binary encoding of $n$. This leads to a quadratic increase in size. Inside the two gadgets, there are two states and three transitions for every $1<i\leq w$, one state $q$, one transition for the loop and $|S|$ transitions exiting the gadget, which leads to a linear increase in size. Hence, the total size of $\mathcal{A'}$ is quadratic in the size of $\mathcal{A}$.
\end{proof}

We also study C1PVASS where all lower-bound guards are $0$.
\begin{definition}[0-C1PVASS]
    A 0-C1PVASS is a C1PVASS $\mathcal{A}=(S,\Gamma,\delta,\bot,s_0,f)$ where $\ell(s)=0$ for all $s \in S$. 
\end{definition}

For 0-C1PVASS, we omit $\ell$.
\textit{Configurations}, \textit{runs}, and \textit{accepting runs} are defined similarly to C1PVASS. Note that, in a configuration $(s,\alpha,c)$, instead of $c\geq\ell(s)$, we now only have the restriction that $c\geq0$, that is, the counter values never go below $0$.

\begin{example}
    In \cref{fig:C1PVASS}, if all the lower bounds were $0$, then we would be able to reach $f$ with a counter value of at least $4$ by taking the run to $s_3$ and taking the self loop a few times before entering $f$ with a counter value of at least $4$. However, the run via $s_2$ would still not be a $4$-reaching run since there is no $a$ to pop from the stack when the run reaches $s_4$.
\end{example}

\subsection{Decision problems}
In the sequel, we focus on the computational complexity of two decision
problems we call reachability and coverability, respectively: Given a C1PVASS and $k \in \mathbb{R}_{\geq 0}$ (in binary),
determine whether it has a $k$-reaching run. Given a C1PVASS and $k \in \mathbb{R}_{\geq 0}$ (in binary), determine
whether it has a $k$-covering run. In addition, we also study the complexity of
the boundedness problem: Given a C1PVASS, determine whether for some $k \in \mathbb{R}_{\geq 0}$ it has no covering run.

\subsection{Our contributions}
We show the following results for 0-C1PVASS.
\begin{enumerate}
    \item For $k>0$, $k$-reachability and $k$-coverability are equivalent.
    \item For $k \geq 0$, $k$-reachability and $k$-coverability are decidable in \textsc{Ptime}.
    \item Boundedness is decidable in \textsc{Ptime}.
    \item If it is bounded, the infimum of all $k \in \mathbb{R}_{\geq 0}$ for which it has no covering run can be computed in polynomial time.
\end{enumerate}
Further, we show that for all $k \geq 0$, $k$-coverability and $k$-reachability are in \textsc{NP} and that boundedness is in \textsc{coNP}, for C1PVASS (with lower-bound guards).

\section{Counter properties of 0-C1PVASS}

We first show a relation between reachability and coverability.

\begin{lemma}\label{lem:cover-reach-equiv}
    The reachability and coverability problems are equivalent for 0-C1PVASS with $k>0$, but $0$-coverability does \textit{not} imply $0$-reachability.
\end{lemma}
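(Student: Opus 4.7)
The plan is to handle the two parts of the statement separately: the equivalence for $k>0$, and the counterexample for $k=0$.

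The easy direction is that $k$-reachability implies $k$-coverability, which is immediate since $c_n = k$ entails $c_n \geq k$. For the converse direction under the hypothesis $k > 0$, I would take a $k$-covering run $\pi = q_0 \dots q_n$ with scaling factors $\gamma_0,\dots,\gamma_{n-1} \in (0,1]$ and final counter value $c_n \geq k$. If $c_n = k$ there is nothing to show, so assume $c_n > k$. The construction is a \emph{uniform scaling}: define $\alpha := k/c_n$ and replace every scaling factor $\gamma_i$ by $\alpha \gamma_i$. Since $k > 0$ and $c_n \geq k$, we have $\alpha \in (0,1]$, so each $\alpha\gamma_i \in (0,1]$ is a valid scaling factor. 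The stack operations are untouched, so the underlying PDA run is preserved. By induction, the new counter values are $c_i' = \alpha c_i$; because each $c_i \geq 0$ and $\alpha > 0$, all new counter values remain nonnegative, so the run is well-formed in the 0-C1PVASS (note this is where the absence of positive lower-bound guards matters). Finally, $c_n' = \alpha c_n = k$, giving a $k$-reaching run.

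For the second part I would exhibit a concrete 0-C1PVASS witnessing that $0$-coverability does not imply $0$-reachability. Take $\mathcal{A}$ with states $\{s_0, f\}$, trivial stack (only $\bot$), accepting state $f$, and a single transition $(s_0,f)$ with counter update $+1$ and no stack action. The only run reaching $f$ has final counter value $\gamma \cdot 1 = \gamma$ for some $\gamma \in (0,1]$, which is strictly positive; hence every run to $f$ covers $0$, so $0$-coverability holds, but no run reaches $f$ with counter value exactly $0$, so $0$-reachability fails. The obstruction is precisely that scaling factors are restricted to $(0,1]$ rather than $[0,1]$, which is what made the uniform scaling argument above require $k > 0$ (the factor $\alpha = k/c_n$ is not allowed to be $0$).

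The only step that needs any care is verifying that uniform scaling preserves the nonnegativity constraint throughout the run; this is what restricts the argument to 0-C1PVASS rather than C1PVASS with general lower-bound guards, since for positive guards $\ell(s)$ the scaled value $\alpha c_i$ may drop below $\ell(s_i)$ even when $c_i \geq \ell(s_i)$. I expect no serious technical obstacle: the argument is essentially a one-line observation once the right transformation (uniform rescaling by $k/c_n$) is identified.
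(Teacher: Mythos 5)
Your proof is correct and follows essentially the same route as the paper: the paper also rescales every counter update of a $k$-covering run with final value $k+c$ by the uniform factor $k/(k+c)$, and uses the same single-transition $+1$ example to separate $0$-coverability from $0$-reachability. Your write-up merely makes explicit the verification (that $c_i' = \alpha c_i$ stays nonnegative and that $\alpha \in (0,1]$ requires $k>0$) which the paper leaves to the reader.
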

\begin{proof}
    By definition, $k$-reachability implies $k$-coverability. To show the converse, take any covering run with counter value at the end of the run being $k+c$, for some $c\geq 0$. Now, we modify the run by scaling all of the counter updates in that run by $\frac{k}{k+c}$. The reader can easily verify that this is indeed a reaching run.

    This proof does not work for $k=0$ since we cannot scale the counter updates by $0$. A simple example for the second part of the lemma would be a 0-C1PVASS with a single transition, which goes from $s_0$ to $f$ with a $+1$ counter update (and no stack update). In this case, $0$ can be covered but not reached.
\end{proof}
From the proof above we directly get the following.
\begin{remark}\label{rem:reach-interval}
    Let $k\in\mathbb{N}_{>0}$. Then, for all $k'\in(0,k]$,
    $k$-reachability implies $k'$-reachability.
\end{remark}
\noindent
We also have the following simple observation about the first nonzero counter update due to our choice of $q_0$.
\begin{remark}\label{rem:first-nonzero-is-pos}
Along any run, the first nonzero counter update must be positive, since the updates cannot be scaled to $0$ and the counter values must always be nonnegative
\end{remark}

Since we have shown that $k$-reachability and coverability are different for $k=0$ but the same for $k>0$, we will first analyse the complexity of $0$-reachability and $0$-coverability in \cref{sec:z-reach-cover}. We then show, in \cref{sec:boundedness}, that boundedness is decidable in \textsc{PTime} and that, if a C1PVASS is bounded, computing the infimum upper bound is also in \textsc{PTime}. Finally, in \cref{sec:pos-cover-reach}, we leverage our algorithm for boundedness to show that $k$-reachability and $k$-coverability for $k>0$ are also in \textsc{PTime}.

\subsection{0-reachability and 0-coverability}\label{sec:z-reach-cover}

In this section, we show that both $0$-reachability and $0$-coverability are in \textsc{PTime} for 0-C1PVASS by reducing the problems to checking nonemptiness of PDAs (with an empty alphabet).

\begin{theorem}\label{thm:zreach-zcover-ptime}
    The $0$-reachability and $0$-coverability problems for a 0-C1PVASS are decidable in \textsc{PTime}.
\end{theorem}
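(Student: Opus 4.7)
The plan is to reduce both problems to emptiness of a polynomially-sized PDA, exploiting that PDA emptiness is decidable in polynomial time (e.g.\ via conversion to CFG and computing productive nonterminals).

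First I would prove a combinatorial characterisation showing that the existence of scaling factors witnessing $0$-coverability (resp.\ $0$-reachability) depends only on the sequence of counter updates along an accepting path of the underlying PDA $\mathcal{P}_\mathcal{A}$. For $0$-coverability the condition is that some accepting path has either all $0$ updates or first nonzero update equal to $+1$. Necessity is immediate from \cref{rem:first-nonzero-is-pos}; sufficiency follows by scaling the first $+1$ to $1$ and every subsequent $\pm 1$ by some $\epsilon > 0$ small enough that the counter stays strictly between $\tfrac{1}{2}$ and $\tfrac{3}{2}$ thereafter. For $0$-reachability the condition strengthens to: the path has all $0$ updates, or its first nonzero is $+1$ and its last nonzero is $-1$. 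The extra necessity comes from the fact that after the last nonzero update the counter is constant, so a final $+1$ would leave it strictly positive. Sufficiency is handled by an explicit construction: scale the first $+1$ and the last $-1$ to suitable values in $(0,1]$ and give all intermediate $+1$s and $-1$s tiny scaling factors $\epsilon_+$, $\epsilon_-$ chosen so that their contributions cancel, producing a trajectory that stays positive in the middle and returns exactly to $0$ at the end.

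Next I would build the PDAs realising these characterisations. For $0$-coverability, I augment the state set of $\mathcal{P}_\mathcal{A}$ with a bit recording whether a $+1$ has been seen; $-1$-transitions are enabled only once the bit is set, and the accepting states are $\{(f,0),(f,1)\}$. For $0$-reachability, I augment states with a flag in $\{0,+,-\}$ tracking the most recent nonzero update, with $-1$ forbidden from the $0$-mode, and accepting states $\{(f,0),(f,-)\}$. Both constructions are linear in $|\mathcal{A}|$, copy the stack alphabet and stack operations of $\mathcal{P}_\mathcal{A}$ verbatim, and their emptiness can be tested in polynomial time, giving the result.

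The main obstacle I expect is the sufficiency direction of the reachability characterisation: corner cases, such as a single $+1$ preceding several $-1$s or a single $-1$ following several $+1$s, force non-uniform scaling choices (uniform scalings like $\gamma_i = 1/|P|$ and $\gamma_i = 1/|N|$ can cause intermediate partial sums to dip below $0$), so one needs a careful explicit construction with a short case split on whether intermediate $+1$s and $-1$s are both present rather than a mere LP-style existence argument.
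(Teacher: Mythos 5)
Your proposal is correct and follows essentially the same route as the paper: both reduce to PDA emptiness via the characterisation that an accepting path of $\mathcal{P}_\mathcal{A}$ lifts to a $0$-covering (resp.\ $0$-reaching) run iff its first nonzero update is positive or absent (resp.\ and its last nonzero update is negative). Your flag-augmented product automaton is just a deterministic rephrasing of the paper's layered copies $\mathcal{P}_0,\mathcal{P}_1$ (and $\mathcal{P}_0'$), and your explicit scaling argument plays the role of the paper's appeal to the fact that negative updates can be scaled down so the counter stays above some small $\varepsilon>0$.
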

The result follows from the fact that checking nonemptiness of the language of a PDA can be done in polynomial time (see, e.g.~\cite[Proof of Lemma 4.1]{Hopcroft+Ullman/79/Introduction}) and the following lemma.

\begin{lemma}\label{lem:covreach0}
    The $0$-reachability and $0$-coverability problems for 0-C1PVASS are polynomial time reducible to the nonemptiness of the language of a PDA.
\end{lemma}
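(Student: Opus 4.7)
The plan is to reduce each of $0$-reachability and $0$-coverability to PDA nonemptiness by constructing, from the 0-C1PVASS $\mathcal{A}$, a PDA $\mathcal{P}$ whose accepting runs are exactly those runs of the underlying PDA $\mathcal{P}_\mathcal{A}$ which can be ``realized'' as valid runs of $\mathcal{A}$ satisfying the appropriate final-counter condition. By \cref{lem:3_counter_updates}, we may assume counter updates lie in $\{-1,+0,+1\}$, so each candidate PDA run induces a sequence of such updates, and realizability amounts to finding scaling factors $\gamma_i\in(0,1]$ for the nonzero updates that make all partial counter sums nonnegative and (for reachability) make the final partial sum exactly $0$.

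The first main step is a short combinatorial characterization. For 0-coverability, a PDA run is realizable iff every update is $+0$, or the first nonzero update is $+1$: the ``only if'' direction is \cref{rem:first-nonzero-is-pos}, and for ``if'' it suffices to scale the first $+1$ to value $1$ and every subsequent nonzero update by a factor at most $1/(2n)$, where $n$ bounds the number of remaining nonzero updates, keeping the counter at least $1/2$ thereafter. For 0-reachability, the characterization strengthens to: every update is $+0$, or the first nonzero is $+1$ and the last nonzero is $-1$. Necessity of the last-nonzero-$-1$ condition follows because the partial sum just before a final nonzero update $u_{i_m}=+1$ would equal $-\gamma_{i_m}<0$; sufficiency is established by scaling the middle updates by a small $\epsilon$, the initial $+1$ by $\epsilon|T_{\min}|+\delta$ (where $T_{\min}$ is the minimum intermediate cumulative sum of middle updates and $\delta>0$ is tiny), and choosing the final $-1$'s scaling factor so that the total is exactly $0$.

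Given the characterization, we construct $\mathcal{P}$ by a simple product: its state set is $S\times\{0,+,-\}$, where the auxiliary component records the sign of the most recent nonzero update ($0$ means ``none yet''). Transitions copy those of $\mathcal{A}$, inheriting identical stack operations, while the auxiliary component is updated as follows: $+0$ preserves it, $+1$ sets it to $+$, and $-1$ sets it to $-$ but is \emph{disallowed} when the starting mode is $0$. The initial state is $(s_0,0)$. For 0-coverability the accepting set is $\{f\}\times\{0,+,-\}$, while for 0-reachability it is $\{(f,0),(f,-)\}$. The PDA $\mathcal{P}$ has size $O(|\mathcal{A}|)$ and can be built in logarithmic space, so combined with the polynomial-time decidability of PDA language nonemptiness this yields the desired polynomial-time reduction.

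The main obstacle will be the sufficiency direction of the 0-reachability characterization: exhibiting explicit scaling factors in $(0,1]$ that simultaneously satisfy the partial-sum nonnegativity invariant and close the total to exactly zero. The construction above works, but verifying that every $\gamma_i$ genuinely lies in $(0,1]$ (and not merely $[0,1]$) requires a careful, explicit choice of the parameters $\epsilon$ and $\delta$ as functions of the minimum intermediate cumulative sum; by contrast, the 0-coverability side and the translation into the product PDA are routine.
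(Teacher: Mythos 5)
Your proposal is correct and follows essentially the same route as the paper: both characterize realizable runs by the signs of the first (and, for reachability, last) nonzero update and then encode this characterization as a linear-size PDA, your three-mode product $S\times\{0,+,-\}$ being just a deterministic repackaging of the paper's block construction $\mathcal{P}_0\to\mathcal{P}_1\to\mathcal{P}_0'$. The only difference is that you spell out the explicit scaling factors witnessing sufficiency, which the paper delegates to a citation.
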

\begin{proof}
    Along any run of the 0-C1PVASS, the first counter update which is not zero must be positive, and after this positive update, all the upcoming negative updates can be scaled down sufficiently so that the counter never goes below some $0<\varepsilon<1$ small enough (cf. \cite[Proposition 14]{coca}). The above two observations give us the following reduction.

    Let $\mathcal{A}$ be the 0-C1PVASS. Make two copies of $\mathcal{A}$ without the counter: namely $\mathcal{P}_0$ and $\mathcal{P}_1$ are copies of the PDA underlying $\mathcal{A}$. Remove from $\mathcal{P}_0$ all the transitions that were not a $+0$ counter update in $\mathcal{A}$. For each transition in $\mathcal{A}$ with a $+1$ counter update, add the transition from the corresponding source state in $\mathcal{P}_0$ to the corresponding target state in $\mathcal{P}_1$. 
    Call the resulting PDA $\mathcal{P}$ (see \cref{fig:cover0_ptime} for a graphical depiction). The accepting states of $\mathcal{P}$ are the copies of the accepting states of $\mathcal{A}$ in both $\mathcal{P}_0$ and $\mathcal{P}_1$.

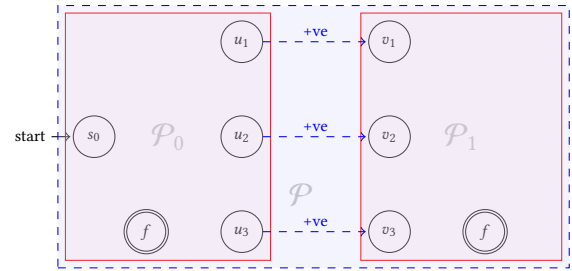
\begin{figure}[h]
    \centering
    \begin{tikzpicture}[shorten >=1pt,auto,node distance=1.9 cm, scale = 0.7, transform shape]

    \node[state,initial](s0)[]{$s_0$};
    \node[state](u2)[right=2cm of s0]{$u_2$};
    \node[state](u1)[above=1cm of u2]{$u_1$};
    \node[state](u3)[below=1cm of u2]{$u_3$};
    \node[state,accepting](f)[left=1cm of u3]{$f$};
    \node (X) [draw=red, fit= (s0) (u1) (u3), inner sep=0.1cm, fill=red!20, fill opacity=0.2]{};
    \node[text opacity=0.2] at (X.center) {\Huge{$\mathcal{P}_0$}};
    
    \node[state](v2)[right=2cm of u2]{$v_2$};
    \node[state](v1)[above=1cm of v2]{$v_1$};
    \node[state](v3)[below=1cm of v2]{$v_3$};
    \node[](invis)[right=2.5cm of v2]{};
    \node[state,accepting](f')[right=1cm of v3]{$f$};

    \node (X1) [draw=red, fit= (v1) (v3) (invis), inner sep=0.1cm, fill=red!20, fill opacity=0.2]{};
    \node[text opacity=0.2] at (X1.center) {\Huge{$\mathcal{P}_1$}};
    
    \node (Y) [draw=blue, fit= (s0) (u1) (v3) (invis), inner sep=0.2cm, fill=blue!20, fill opacity=0.2, dashed] {};
    \node[text opacity=0.2,below right=0.7cm of u2] {\Huge{$\mathcal{P}$}};
    
    \path[->,color=blue,dashed] 
    (u1)   edge [above] node [align=center] {+ve} (v1)
    (u2)   edge [above] node [align=center] {+ve} (v2)
    (u3)   edge [above] node [align=center] {+ve} (v3)
    ;
    \end{tikzpicture}
    \caption{The construction of the PDA $\mathcal{P}$ where $\mathcal{P}_0$ is a copy of $\mathcal{A}$ obtained by removing all the counter updates and removing all the transitions that have a nonzero counter update; the transitions from $\mathcal{P}_0$ to $\mathcal{P}_1$ are exactly the transitions in $\mathcal{A}$ with a positive counter update.}
    \label{fig:cover0_ptime}
\end{figure}
    \begin{claim}
        There exists a 0-covering run of $\mathcal{A}$ iff $L(\mathcal{P})$ is nonempty.
    \end{claim}
    \begin{proof}
        By construction, if $L(\mathcal{P})$ is empty then there is no $0$-covering run in $\mathcal{A}$ where the first nonzero counter update is positive and there is no $0$-covering run in $\mathcal{A}$ where all updates are $+0$. It follows that $\mathcal{A}$ has no covering run.

        If there does not exist a 0-covering run in $\mathcal{A}$, then there is no run starting in $s_0$ and ending in $f$ in the underlying $PDA$ of $\mathcal{A}$ with the property that the first nonzero counter update, if any, is positive. Thus, by construction of $\mathcal{P}$, there is no accepting run ending in the copy of $f$ in $\mathcal{P}_0$ since the only transitions are ones which had $+0$ updates in $\mathcal{A}$. There is also no accepting run ending in the copy of $f$ in $\mathcal{P}_1$ since these runs correspond to runs in $\mathcal{A}$ with $+1$ being the first nonzero update and with at least one $+1$ update.  
    \end{proof}

This shows that $0$-coverability in $\mathcal{A}$ is equivalent to asking whether $L(\mathcal{P})$ is nonempty. For reachability, we note that the following are sufficient and necessary.
\begin{enumerate}
    \item There exists a run starting from the initial configuration and ending at $f$ with only $+0$ counter updates, or
    \item there exists a run starting from the initial configuration and ending at $f$ where the first nonzero update is positive and the last nonzero update is negative.
\end{enumerate}

Observe that the PDA $\mathcal{P}$ already takes care of the accepting runs with only zero updates (i.e. those that never leave $\mathcal{P}_0$) and the property of having a positive number as the first nonzero counter update is satisfied by all runs that reach $\mathcal{P}_1$. We now modify $\mathcal{A}$ by adding a copy $\mathcal{P}_0'$ of $\mathcal{P}_0$ to the right of $\mathcal{P}_1$, with the only transitions from $\mathcal{P}_1$ to $\mathcal{P}_0'$ being ones with a negative counter update in the C1PVASS $\mathcal{A}$. Finally, the accepting states are all copies of accepting states from $\mathcal{A}$ in $\mathcal{P}_0$ or $\mathcal{P}'_0$ (the new PDA $\mathcal{P}$ is depicted in \cref{fig:reach0_ptime}).

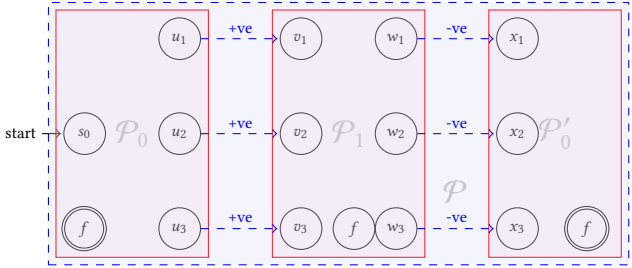
\begin{figure}[h]
    \centering
    \begin{tikzpicture}[shorten >=1pt,auto,node distance=1 cm, scale = 0.7, transform shape]

    \node[state,initial](s0)[]{$s_0$};
    \node[state](u2)[right=1cm of s0]{$u_2$};
    \node[state](u1)[above=1cm of u2]{$u_1$};
    \node[state](u3)[below=of u2]{$u_3$};
    \node[state,accepting](f)[left=1cm of u3]{$f$};
    \node (X) [draw=red, fit= (s0) (u1) (u3), inner sep=0.1cm, fill=red!20, fill opacity=0.2]{};
    \node[text opacity=0.2] at (X.center) {\Huge{$\mathcal{P}_0$}};
    
    \node[state](v2)[right=1.5cm of u2]{$v_2$};
    \node[state](v1)[above=1cm of v2]{$v_1$};
    \node[state](v3)[below=of v2]{$v_3$};
    \node[state](f")[right=.2cm of v3]{$f$};
    \node[state](w2)[right=1cm of v2]{$w_2$};
    \node[state](w1)[above=1cm of w2]{$w_1$};
    \node[state](w3)[below=of w2]{$w_3$};
    \node (X1) [draw=red, fit= (v1) (v3) (w2), inner sep=0.1cm, fill=red!20, fill opacity=0.2]{};
    \node[text opacity=0.2] at (X1.center) {\Huge{$\mathcal{P}_1$}};

    \node[state](x2)[right=1.5cm of w2]{$x_2$};
    \node[state](x1)[above=1cm of x2]{$x_1$};
    \node[state](x3)[below=of x2]{$x_3$};
    \node[state,accepting](f')[right=.5cm of x3]{$f$};
    \node[](invis)[right=1.2cm of x2]{};
    \node (X2) [draw=red, fit= (x1) (x3) (invis), inner sep=0.1cm, fill=red!20, fill opacity=0.2]{};
    \node[text opacity=0.2] at (X2.center) {\Huge{$\mathcal{P}_0'$}};
    
    \node (Y) [draw=blue, fit= (s0) (u1) (v3) (invis), inner sep=0.2cm, fill=blue!20, fill opacity=0.2, dashed] {};
    \node[text opacity=0.2]()[below right=0.7cm of w2] {\Huge{$\mathcal{P}$}};
    
    \path[->,color=blue,dashed] 
    (u1)   edge [above] node [align=center] {+ve} (v1)
    (u2)   edge [above] node [align=center] {+ve} (v2)
    (u3)   edge [above] node [align=center] {+ve} (v3)
    (w1)   edge [above] node [align=center] {-ve} (x1)
    (w2)   edge [above] node [align=center] {-ve} (x2)
    (w3)   edge [above] node [align=center] {-ve} (x3)
    ;
    \end{tikzpicture}
    \caption{The construction of the PDA $\mathcal{P}$ where $\mathcal{P}_0$ and $\mathcal{P}_0'$ are copies of $\mathcal{A}$ obtained by the counter and removing all the transitions that have a nonzero counter update; the transitions from $\mathcal{P}_0$ to $\mathcal{P}_1$ are exactly the transitions in $\mathcal{A}$ with a positive counter update and those from $\mathcal{P}_1$ to $\mathcal{P}_0'$ are exactly the ones with a negative counter update.}
    \label{fig:reach0_ptime}
\end{figure}

\begin{claim}
    There exists a $0$-reaching run of $\mathcal{A}$ iff $L(\mathcal{P})$ is nonempty.
\end{claim}

The proof is similar to the one given for $0$-coverability and follows from the fact that any reaching run with final counter value $0$ satisfies exactly one of the sufficient and necessary conditions stated earlier. This concludes the proof of \cref{lem:covreach0}.
\end{proof}

\subsection{Boundedness for 0-C1PVASS}\label{sec:boundedness}

In this section, we first analyze the complexity of deciding whether a 0-C1PVASS is bounded or not. If it is bounded, we provide a bound which is polynomial (when encoded in binary) in the size of the encoding of the C1PVASS. We next show that, for a bounded 0-C1PVASS, the ``tight'' bound, that is, 
\begin{equation}\label{eqn:bound}
b=\inf\Set{k\in\mathbb{R}\mid \mathcal{A}\text{ has no $k$-covering run}}
\end{equation}
is an integer and the natural decision problem associated to finding $b$ is in $\textsc{PTime}$.
First, we convert the 0-C1PVASS into a PDA as we did in the proof of \cref{lem:covreach0} in \cref{fig:cover0_ptime}, further modify its alphabet, and observe some properties about the resulting PDA.

Let $\mathcal{A}$ be the 0-C1PVASS. Make two copies $\mathcal{\mathcal{P}}_0$ and $\mathcal{P}_1$ of $\mathcal{A}$ without the counter. Next, remove from $\mathcal{P}_0$ all the transitions that were not a $+0$ counter update in $\mathcal{A}$ and add, for each transition in $\mathcal{A}$ with a positive counter update, a transition from $\mathcal{P}_0$ to $\mathcal{P}_1$. The copies of accepting states in $\mathcal{P}_0$ and $\mathcal{P}_1$ are all accepting in the resulting PDA, which we call $\mathcal{P}$ (see \cref{fig:cover0_ptime}). To obtain $\mathcal{P}'$ from $\mathcal{P}$, we modify its alphabet. The alphabet $\Sigma$ of $\mathcal{P}'$ is unary, i.e. $\Sigma = \{a\}$. The transitions of $\mathcal{P}'$ read $a$ if they had a $+1$ counter update in $\mathcal{A}$ and read the empty word $\epsilon$ otherwise.

One can see a bijection between accepting runs in $\mathcal{A}$ and $\mathcal{P'}$. Let $\pi$ be an accepting run in $\mathcal{P'}$. The corresponding run in $\mathcal{A}$ has the property that the first nonzero update is a positive update (i.e., a transition from $\mathcal{P}_0$ to $\mathcal{P}_1$) which makes it an accepting run in $\mathcal{A}$. Similarly, an accepting run in $\mathcal{A}$ must have the first nonzero update a $+1$, hence, it is also an accepting run in $\mathcal{P'}$ by construction.

The lemma below follows immediately from the construction.

\begin{lemma}\label{lem:bijection-CPVASS-PDA-zboundedness}
    $a^m\in L(\mathcal{P}')$ iff there is an accepting run in $\mathcal{A}$ with exactly $m$ many $+1$ updates.
\end{lemma}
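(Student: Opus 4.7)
The plan is to leverage the informal bijection between accepting runs of $\mathcal{A}$ and of $\mathcal{P}'$ described in the paragraph just above the lemma, and then to track which transitions contribute letters to the word read by the $\mathcal{P}'$-run. I first recall from the construction that the alphabet of $\mathcal{P}'$ is unary, $\Sigma = \Set{a}$, and that a transition of $\mathcal{P}'$ reads $a$ iff the transition of $\mathcal{A}$ it mirrors has a $+1$ update. This covers both the new $a$-transitions from $\mathcal{P}_0$ into $\mathcal{P}_1$ (mirroring the first $+1$ update of $\mathcal{A}$) and the $a$-transitions inside $\mathcal{P}_1$ that mirror subsequent $+1$ updates; all other transitions read $\epsilon$.

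For the ``$\Leftarrow$'' direction, I would take an accepting run $\pi$ of $\mathcal{A}$ with exactly $m$ many $+1$ updates and construct its mirror in $\mathcal{P}'$. By \cref{rem:first-nonzero-is-pos}, all updates of $\pi$ preceding the first $+1$ are $+0$, so the mirror stays in $\mathcal{P}_0$; the first $+1$ (if any) crosses into $\mathcal{P}_1$; all subsequent transitions mirror inside $\mathcal{P}_1$. Since stack operations are preserved transition-for-transition and the copies of $f$ in both $\mathcal{P}_0$ and $\mathcal{P}_1$ are accepting, the mirror is an accepting run of $\mathcal{P}'$; it reads exactly one $a$ per $+1$ transition of $\pi$, for a total of $a^m$.

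For the ``$\Rightarrow$'' direction, I would take an accepting run $\rho$ of $\mathcal{P}'$ reading $a^m$ and project it onto its sequence of mirrored transitions in $\mathcal{A}$. By construction, this projected sequence uses only $+0$ updates while in $\mathcal{P}_0$, takes a single $+1$ to cross into $\mathcal{P}_1$ (if it ever crosses), and then uses arbitrary updates inside $\mathcal{P}_1$; in particular, its first nonzero update (if any) is $+1$, and the total number of $+1$ updates equals the number of $a$'s read by $\rho$, namely $m$. By the realizability argument already used in the proof of \cref{lem:covreach0}, such a syntactic trace can always be scaled into a genuine run of $\mathcal{A}$: the first $+1$ is taken with factor $1$ to lift the counter strictly above $0$, and every subsequent $-1$ is scaled by a sufficiently small positive factor so that the counter stays positive throughout. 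This yields the desired accepting run of $\mathcal{A}$ with exactly $m$ many $+1$ updates.

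The main (and essentially only) potential obstacle is the realizability step in the forward direction, but this is already handled by the scaling argument from the proof of \cref{lem:covreach0}; the novelty here is merely the observation that the alphabet labelling of $\mathcal{P}'$ exactly counts the $+1$ transitions along any run, which makes the lemma follow immediately from the construction.
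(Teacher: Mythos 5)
Your proof is correct and follows essentially the same route as the paper: the paper establishes the bijection between accepting runs of $\mathcal{A}$ and $\mathcal{P}'$ in the paragraph preceding the lemma (using \cref{rem:first-nonzero-is-pos} for one direction and the scaling argument from \cref{lem:covreach0} for the other) and then states that the lemma ``follows immediately from the construction,'' since $\mathcal{P}'$ reads $a$ exactly on the transitions that carried a $+1$ update. You have merely spelled out the same argument in more detail, including the realizability step, which is exactly the one the paper relies on.
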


For all $0<\varepsilon<1$, and an accepting run in the PDA $\mathcal{P'}$, in the corresponding run in $\mathcal{A}$, one can choose $\gamma=1$ for all the $+1$ updates and $\gamma\in(0,1]$ small enough, for all negative updates, so that their sum is in the interval $(-\varepsilon,0)$. This leads to the following result.

\begin{lemma}\label{lem:A-Pprime-bounded}
    The cardinality of $L(\mathcal{P}')$ is bounded iff the 0-C1PVASS $\mathcal{A}$ is bounded. Moreover, if the maximum length of a word accepted by $\mathcal{P}'$ is $p \in \mathbb{N}$ then $b=p$, where $b$ is as in \Cref{eqn:bound}.
\end{lemma}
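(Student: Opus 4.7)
My plan is to exploit the bijection of \cref{lem:bijection-CPVASS-PDA-zboundedness} between accepting runs of $\mathcal{A}$ and words of $L(\mathcal{P}')$ in order to translate the final counter value of a run into the number of $+1$ updates it uses. For the easy direction, observe that $L(\mathcal{P}') \subseteq \{a\}^*$, so $L(\mathcal{P}')$ is finite iff its word lengths admit a maximum $p$. Fix any accepting run of $\mathcal{A}$ with $m$ many $+1$ updates and scaling factors $\gamma_i \in (0,1]$; since $-1$ updates contribute non-positively and each $+1$ contributes at most $\gamma_i \leq 1$, the final counter value is at most $m$. Combined with \cref{lem:bijection-CPVASS-PDA-zboundedness}, if $L(\mathcal{P}')$ is finite with maximum word length $p$ then no $k$-covering run of $\mathcal{A}$ exists for $k > p$, so $\mathcal{A}$ is bounded and $b \leq p$.

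For the matching lower bound, I would take, for each $m$ with $a^m \in L(\mathcal{P}')$, a corresponding accepting run of $\mathcal{A}$ (which exists by the bijection) and modify its scaling factors: set $\gamma_i = 1$ at every $+1$ update and $\gamma_i = \delta$ at every $-1$ update, for a small $\delta > 0$. Provided this re-scaled run is valid, its final counter value equals $m - \delta N$, where $N$ is the number of $-1$ updates, so for every $k < m$ a sufficiently small $\delta$ yields a $k$-covering run. Taking $m = p$ gives $b \geq p$ and hence $b = p$. Applying the same construction with $m$ unbounded shows that if $L(\mathcal{P}')$ is infinite then $\mathcal{A}$ admits $k$-covering runs for arbitrarily large $k$, i.e.\ $\mathcal{A}$ is unbounded, completing the ``iff''.

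The main obstacle is to verify that the re-scaled run is valid, i.e.\ that its counter never goes negative. Writing $P_i$ and $N_i$ for the numbers of $+1$ and $-1$ updates at positions $\leq i$, the new counter value after step $i$ is $P_i - \delta N_i$, so only a step with $u_i = -1$ can violate nonnegativity. At any such step, the counter just before it in the original run must have been strictly positive (otherwise the original step, scaled by $\gamma_i > 0$, would already have driven it negative), so at least one $+1$ appears among $u_1, \ldots, u_{i-1}$, i.e.\ $P_i = P_{i-1} \geq 1$. The new counter value is therefore at least $1 - \delta N$, so any $\delta \in (0, 1/(N+1)]$ works, and the rest of the argument is routine.
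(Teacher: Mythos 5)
Your proof is correct and takes essentially the same route as the paper: both rely on the bijection of \cref{lem:bijection-CPVASS-PDA-zboundedness} and the rescaling that sets $\gamma=1$ on every $+1$ update and an arbitrarily small $\gamma$ on every $-1$ update so that the final counter value approaches the number of $+1$ updates from below. Your explicit check that the rescaled run never goes negative merely fills in a step the paper delegates to a citation.
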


\noindent
There are \textsc{Ptime} algorithms (see, e.g., \cite[Theorem 6.6]{Hopcroft+Ullman/79/Introduction}) to determine whether the language of a PDA is finite. We thus get:

\begin{theorem}\label{thm:zboundedness-ptime}
    Deciding boundedness of a 0-C1PVASS $\mathcal{A}$ is in \textsc{PTime}. Moreover, the bound can be at most $2^{(4|\mathcal{A}|)^9}$, where $|\mathcal{A}|$ is the size of the encoding of $\mathcal{A}$.
\end{theorem}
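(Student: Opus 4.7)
The plan is to invoke the machinery already set up: by \Cref{lem:A-Pprime-bounded}, the 0-C1PVASS $\mathcal{A}$ is bounded if and only if the language $L(\mathcal{P}')$ of the constructed unary PDA $\mathcal{P}'$ is finite, and the tight bound $b$ equals the length of the longest word in $L(\mathcal{P}')$. Since $\mathcal{P}'$ is obtained from $\mathcal{A}$ by duplicating states and relabelling transitions, its size is at most a small constant multiple of $|\mathcal{A}|$. Finiteness of a context-free language is decidable in polynomial time (as cited from Hopcroft--Ullman), which immediately yields the \textsc{PTime} upper bound for deciding boundedness of $\mathcal{A}$.

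For the quantitative part, I would pass to a context-free grammar in Chomsky normal form. First, convert the PDA $\mathcal{P}'$ into an equivalent CFG using the standard triple construction; the resulting grammar has $O(|\mathcal{P}'|^3)$ variables, hence at most a cubic blowup in $|\mathcal{A}|$. Next, convert this CFG to Chomsky normal form, which incurs at most an additional cubic blowup in the number of variables (as noted in the preliminaries). Combining the two steps, the CNF grammar $G$ equivalent to $\mathcal{P}'$ has at most $N \leq (4|\mathcal{A}|)^9$ variables (absorbing the constants from each step into the factor $4$).

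The key structural observation is: if $L(G)$ is finite, then in every derivation tree of a word in $L(G)$, no variable can appear twice along any root-to-leaf path. Indeed, if $A$ appeared twice on some path, one could iterate the pumping decomposition to obtain infinitely many words, contradicting finiteness. Hence every derivation tree has depth at most $N$, and because $G$ is in CNF (productions $A\to BC$ or $A\to a$) a binary tree of depth $N$ has at most $2^{N-1}$ leaves. Consequently, any word in $L(\mathcal{P}')$ has length at most $2^N \leq 2^{(4|\mathcal{A}|)^9}$, and by \Cref{lem:A-Pprime-bounded} the same upper bound holds for $b$.

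The only subtle step is accounting correctly for the constants in the chain of size blowups (the reduction of \Cref{lem:covreach0}, the PDA-to-CFG triple construction, and the CNF normalisation) so that they all fit inside the factor $4$ in $(4|\mathcal{A}|)^9$; this is a routine bookkeeping exercise rather than a genuine obstacle. Everything else reduces to invoking known polynomial-time algorithms for testing emptiness and finiteness of context-free languages, together with the pumping argument above.
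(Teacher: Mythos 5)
Your proposal is correct and follows essentially the same route as the paper: reduce to the unary PDA $\mathcal{P}'$ of \cref{lem:A-Pprime-bounded}, invoke the polynomial-time finiteness test for context-free languages, and then bound word length via a CNF grammar of size at most $(4|\mathcal{A}|)^9$ in which finiteness forces no variable to recur along a derivation path, so that binary branching yields the $2^{(4|\mathcal{A}|)^9}$ bound. The only cosmetic difference is that the paper explicitly inserts an intermediate step (eliminating empty-stack updates, doubling the size to $4|\mathcal{A}|$) before the two cubic blowups, which is exactly the bookkeeping you deferred.
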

\begin{proof}
The first part of the proof follows from the previous discussion.
The second part is equivalent to bounding the length of the longest world accepted by $\mathcal{P'}$ by $2^{(4|\mathcal{A}|)^9}$. Recall that CFGs are expressively equivalent to PDAs and that CNF is a normal form for CFGs where production rules are of the form $A\rightarrow BC$, $A\rightarrow a$ or $A\rightarrow\varepsilon$. 
As before, if the bound on the length of words in $\mathcal{P}'$ exists, say $p$, it will also be a bound on the largest counter reachable in $\mathcal{A}$. 
The size of $\mathcal{P}'$ is at most $2|\mathcal{A}|$.
Now, we translate $\mathcal{P'}$ to an equivalent CFG in CNF. Towards this, converting a PDA to one without empty stack updates can at most double the size, that is, it can have size $4|\mathcal{A}|$. Converting such a PDA to a CFG and then to CNF (without useless rules) \cite[Theorem 4.5]{Hopcroft+Ullman/79/Introduction} leads to at most a cubic expansion in each step. That is, the CNF will have size at most $k=(4|\mathcal{A}|)^9$. This CFG in CNF has a finite language iff there is no loop \cite[Theorem 6.6]{Hopcroft+Ullman/79/Introduction}: all variables cannot derive a string containing the same variable. Thus, because of the form of production rules of CFGs in CNF, which can at most double the length of the final word at every production, we get a bound of $2^k$ on the longest word in the language of the CFG.
\end{proof}

Using this upper bound on the largest reachable counter for a bounded 0-C1PVASS, we argue the tight upper bound is an integer and give an algorithm to compute it.

\begin{remark}
Using \cref{lem:bijection-CPVASS-PDA-zboundedness} and the fact that 
nonnegative updates can be scaled down arbitrarily,
one can see that the bound $b$ defined in \Cref{eqn:bound} is a nonnegative integer when it exists.
\end{remark}

\begin{theorem}\label{thm:tight_bound}
    The tight upper bound of a bounded 0-C1PVASS can be computed in \textsc{PTime}.
\end{theorem}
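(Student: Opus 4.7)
The plan is to reduce the computation of $b$ to a longest-word computation in a CFG whose variable-dependency graph is a polynomial-size DAG. As a starting point, I would reuse the unary PDA $\mathcal{P}'$ and the Chomsky-normal-form CFG $G$ already built in the proof of \cref{thm:zboundedness-ptime}: $G$ has size polynomial in $|\mathcal{A}|$ and satisfies $L(G)=L(\mathcal{P}')$, so by \cref{lem:A-Pprime-bounded} the tight bound $b$ equals the length of the longest word in $L(G)$.

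First I would run the standard polynomial-time preprocessing that discards from $G$ every variable which derives no terminal word and every variable unreachable from $S$, while preserving $L(G)$. Because $\mathcal{A}$ is bounded, $L(G)$ is finite, and then the loop-detection result cited as \cite[Theorem 6.6]{Hopcroft+Ullman/79/Introduction} guarantees that no surviving variable can derive a sentential form containing itself. Consequently, the relation ``$A$ has a production whose body contains $B$'' is a strict partial order on the remaining variables and admits a topological sort in polynomial time.

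The core step is a dynamic program in reverse topological order. For each variable $A$, define $\ell(A)$ as the length of the longest terminal word derivable from $A$, and compute it by taking the maximum, over all productions with head $A$, of $1$ for each production $A\to a$, of $0$ for $A\to\varepsilon$, and of $\ell(B)+\ell(C)$ for $A\to BC$. A short induction on topological depth shows $\ell(S)=b$. Since \cref{thm:zboundedness-ptime} bounds $b$ by $2^{(4|\mathcal{A}|)^9}$, every value $\ell(A)$ fits in polynomially many bits, so the polynomially many additions and maxima cost polynomial time in total, and $b$ is produced in binary.

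The step I expect to be the only real obstacle, modest as it is, is the justification that the variable-dependency graph on the surviving grammar is acyclic; everything else is either textbook CFG cleanup or linear-time arithmetic on binary-encoded integers. This is precisely where the boundedness hypothesis on $\mathcal{A}$, transferred through \cref{lem:A-Pprime-bounded} into finiteness of $L(G)$, is used.
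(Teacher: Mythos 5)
Your proposal is correct and follows essentially the same route as the paper: both reduce the problem to computing the length of a longest word of the CNF grammar equivalent to $\mathcal{P}'$, both hinge on the acyclicity of the variable-dependency graph (derived from finiteness of the language once useless variables are removed), and both use the $2^{(4|\mathcal{A}|)^9}$ bound to keep the binary arithmetic polynomial. The only cosmetic difference is that you evaluate the dynamic program in a single pass over a reverse topological order, whereas the paper iterates the same max/sum update to a fixed point and argues it stabilises within $|V|$ rounds for the same acyclicity reason.
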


\begin{proof}
    The idea for the proof comes from \cite{stackoverflow-min-len-word-PDA} which gives a \textsc{PTime} algorithm to find the shortest word accepted by a CFG.
    
    Assume the language is not empty. Construct the PDA described in \cref{lem:A-Pprime-bounded}. We know that if $m$ is the length of a longest word accepted by the PDA $\mathcal{P}$, then $m$ is the tight bound. We also know, by \cref{thm:zboundedness-ptime}, that $m\leq 2^k$ where $k=(4|\mathcal{A}|)^9$. We construct a grammar $(V,\Sigma,P,S)$ in CNF for the PDA. This grammar has size at most $2^{(4|\mathcal{A}|)^9}$, as shown in the previous theorem. Since the grammar is in CNF, all productions are of the form $A\rightarrow BC$ or $A\rightarrow a$ and the language of all variables is nonempty. 
    
    Define the function $N:V\rightarrow\mathbb{N}$ such that $N(A)$ is the length of the longest word produced by the variable $A$, for all $A\in V$. The following algorithm computes $N(A)$ for all $A\in V$.
    \begin{enumerate}
        \item Initialize $W(A)=0$ for all $A\in V$, $W(a)=1$ for all $a \in T$.
        \item Repeat, for all $A$ and all productions with head $A$:\[
            W(A)=\begin{cases}
                \max\{W(B)+W(C),W(A)\} & \text{if }A\rightarrow BC;\\
                \max\{W(a),W(A)\} & \text{if }A\rightarrow a.
            \end{cases}
        \] until we reach a fix point (we know a fix point will be reached eventually since the length of words is bounded).
        \item Output the vector $W(V)$.
    \end{enumerate}
    We know that the above algorithm terminates since the length of the longest word is bounded. It remains to show that it terminates in polynomially many iterations. Each iteration has $|V||P|$ comparisons of numbers bounded by $2^{(4|\mathcal{A}|)^9}$, and we know that such numbers can be compared in time polynomial in $|\mathcal{A}|$. Hence, showing that the fix point is obtained in polynomially many iterations of the algorithm suffices to establish that the tight bound can be obtained in polynomial time.

    Consider the directed graph with $V \cup T$ as vertices and where we add the edge $(A,\beta)$, where $A\in V$ and $\beta\in V\cup T$, if and only if there is a production in $P$ whose head is $A$ and with $\beta$ in its body. The graph can be shown to be acyclic, since the language of the grammar is finite and the language of every variable is nonempty. Now, every iteration of the algorithm induces a labelling of the vertices of the graph via $W$. Observe that the label of a vertex only changes if the label of one of its immediate successors changes. It follows that the fix point is reached after at most $|V|$ iterations.    
\end{proof}

However, we do not know if the bound itself is reachable. That is, the interval of nonzero reachable values can be $(0,b)$ or $(0,b]$.

\begin{lemma}\label{lem:closed-open-bound}
    The set of all reachable values in a 0-C1PVASS is closed on the right (i.e., the bound $b$ can be reached) iff there is an accepting run for $a^b$ in $\mathcal{P'}$ which does not contain any $-1$ transitions from $\mathcal{A}$. 
\end{lemma}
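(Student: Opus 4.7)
The plan is to exploit the bijection of \cref{lem:bijection-CPVASS-PDA-zboundedness} between accepting runs of $\mathcal{A}$ and accepted words of $\mathcal{P}'$, together with the characterization $b = \max\{|w| \mid w \in L(\mathcal{P}')\}$ provided by \cref{lem:A-Pprime-bounded}. The arithmetic ingredient I will use is that, along any run of $\mathcal{A}$, each $+1$ transition contributes some $\gamma^+ \in (0,1]$ to the final counter value and each $-1$ transition contributes $-\gamma^-$ with $\gamma^- \in (0,1]$, while $+0$ transitions contribute nothing.

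For the forward direction I start with an accepting run $\rho$ of $\mathcal{A}$ whose final counter value equals $b$. Letting $m$ and $k$ count the $+1$ and $-1$ transitions of $\rho$ and $\gamma_i^+, \gamma_j^- \in (0,1]$ be their scaling factors, the final counter value $\sum_{i=1}^m \gamma_i^+ - \sum_{j=1}^k \gamma_j^- = b$ immediately yields $\sum_i \gamma_i^+ \geq b$, and since each $\gamma_i^+ \leq 1$ we get $m \geq b$. The bijection then maps $\rho$ to a word of length $m$ in $L(\mathcal{P}')$, and the maximality of $b$ forces $m \leq b$. Combining gives $m = b$, every $\gamma_i^+ = 1$, and $\sum_j \gamma_j^- = 0$; since each $\gamma_j^- > 0$, this forces $k = 0$. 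Thus $\rho$ uses no $-1$ transitions and its image under the bijection is an accepting run of $\mathcal{P}'$ for $a^b$ of the required form.

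For the converse, given an accepting run of $\mathcal{P}'$ for $a^b$ whose preimage $\rho$ in $\mathcal{A}$ uses no $-1$ transitions, I choose every scaling factor equal to $1$ along $\rho$. By \cref{lem:bijection-CPVASS-PDA-zboundedness} there are exactly $b$ many $+1$ transitions in $\rho$, so the final counter value is exactly $b$, witnessing that the bound is attained.

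I expect no real obstacle: both directions are purely arithmetic once the two supporting lemmas are invoked. The only subtle step is the squeezing argument in the forward direction, where I combine $m \geq b$ (from the counter constraint) and $m \leq b$ (from maximality) to simultaneously conclude saturation of the $+1$-scales and the total absence of $-1$ transitions.
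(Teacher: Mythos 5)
Your proof is correct and takes essentially the same route as the paper, which justifies the lemma in one sentence by observing that a $-1$ update cannot be scaled to $0$ and that $b$ bounds the final counter value of any accepting run; your squeeze $b \leq \sum_i \gamma_i^+ \leq m \leq b$ is just the detailed version of that observation, combined with the same appeal to \cref{lem:bijection-CPVASS-PDA-zboundedness} and \cref{lem:A-Pprime-bounded}.
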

The proof follows from the simple fact that any $-1$ update in $\mathcal{A}$ cannot be scaled down to $0$, and $b$ is an upper  bound on the counter value in the final configuration of any accepting run.

\cref{lem:closed-open-bound} gives us an easy way to check whether the interval of all reachable counter values is closed on the right. Remove all transitions from $\mathcal{P'}$ which correspond to a $-1$ update transition in $\mathcal{A}$. This PDA $\mathcal{P''}$ will accept $a^m$ iff $\mathcal{A}$ has an accepting run with exactly $m$ many $+1$ updates and no negative updates.

\subsection{$k$-reachability and $k$-coverability for $k>0$}\label{sec:pos-cover-reach}

The following stronger theorem implies that both $k$-reachability and coverability are in \textsc{PTime} for all $k\geq0$.

\begin{theorem}\label{thm:exact-zreach-interval}
     The interval of all reachable counter values of a 0-C1PVASS is computable in polynomial time.
\end{theorem}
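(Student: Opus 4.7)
The plan is to observe that \cref{rem:reach-interval} together with \cref{lem:cover-reach-equiv} forces the set $R$ of reachable counter values to have a very restricted shape: apart from whether $0$ belongs to $R$ and whether the supremum is attained, $R$ is determined. Concretely, \cref{lem:cover-reach-equiv} turns coverability into reachability for positive values, and \cref{rem:reach-interval} says that reachability of $k>0$ entails reachability of every $k'\in(0,k]$. Hence $R\cap\mathbb{R}_{>0}$ is either $(0,s)$ or $(0,s]$ with $s=\sup R\in\mathbb{R}_{\geq 0}\cup\{\infty\}$, and $R$ is fully specified by four boolean facts: (i) whether $R=\emptyset$, (ii) whether $0\in R$, (iii) whether $R$ is bounded, and, in the bounded case, (iv) the value of $s$ and whether $s\in R$. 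Each of these can be decided by the tools already established in this section.

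First, I would decide (i) by checking $0$-coverability via \cref{thm:zreach-zcover-ptime}, since every accepting run trivially covers $0$. Item (ii) is exactly $0$-reachability, again by \cref{thm:zreach-zcover-ptime}. Item (iii) is decided by \cref{thm:zboundedness-ptime}. If the system is unbounded then, for every $k>0$, there exists a $k$-covering run, which by \cref{lem:cover-reach-equiv} is a $k$-reaching run, and then \cref{rem:reach-interval} yields $(0,k]\subseteq R$; letting $k\to\infty$ gives $R\supseteq(0,\infty)$, so the answer is $(0,\infty)$ or $[0,\infty)$ depending on (ii).

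In the bounded case I compute $b=s$ via \cref{thm:tight_bound}; if $b=0$ then, combined with (i), the answer is $\emptyset$ or $\{0\}$. If $b>0$, values arbitrarily close to $b$ are reachable by definition of the supremum, so $(0,b)\subseteq R$ via \cref{rem:reach-interval}. To decide whether $b\in R$, I invoke \cref{lem:closed-open-bound}: construct the PDA $\mathcal{P}''$ obtained from $\mathcal{P}'$ by deleting the transitions corresponding to $-1$ updates in $\mathcal{A}$, so that $b\in R$ iff $a^b\in L(\mathcal{P}'')$. Since $b$ can be exponential in $|\mathcal{A}|$, rather than naively querying $\mathcal{P}''$ on the word $a^b$, I apply the polynomial-time longest-word algorithm from the proof of \cref{thm:tight_bound} to $\mathcal{P}''$ to obtain the length $b''$ of the longest word in $L(\mathcal{P}'')$; because $L(\mathcal{P}'')\subseteq L(\mathcal{P}')$ we have $b''\leq b$, and $b''=b$ iff $a^b\in L(\mathcal{P}'')$. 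The binary integers $b$ and $b''$ each have polynomial bit-length and so can be compared in polynomial time. The main subtlety is precisely this last step: the tight bound $b$ is only polynomial in the binary encoding but may be exponential in magnitude, so care is needed to avoid any procedure that implicitly enumerates up to $b$.
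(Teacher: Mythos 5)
Your proposal is correct and follows essentially the same route as the paper: decide $0$-reachability via \cref{thm:zreach-zcover-ptime}, decide boundedness via \cref{thm:zboundedness-ptime}, compute the tight bound via \cref{thm:tight_bound}, and decide right-closedness via \cref{lem:closed-open-bound}. Your treatment of the last step is in fact more careful than the paper's sketch, since running the longest-word algorithm on $\mathcal{P}''$ and comparing the result to $b$ in binary correctly avoids any procedure that would enumerate up to the possibly exponential value of $b$.
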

\begin{proof}
    Use \cref{thm:zreach-zcover-ptime} to decide whether $0$ is reachable. If so, the interval is closed on the left, open otherwise. Next, use \cref{thm:zboundedness-ptime} to decide if the highest reachable counter value is bounded. If not, the upper bound will be $\infty$ (and thus open).  If it is bounded, use \cref{thm:tight_bound} to compute the tight bound $b$. Finally, use \cref{lem:closed-open-bound} to find whether the interval is closed on the right.
\end{proof}

Note that \textsc{Ptime}-hardness for all the problems in this section follows from the nonemptiness problem for PDAs being \textsc{Ptime}-hard  (see, e.g. \cite[Prop. 1]{dmitry-rupak2014}). For coverability and reachability this is immediate, for boundedness one can add a self loop with a positive counter update on accepting states.

\section{Counter properties of C1PVASS}
In this section, we show that coverability and reachability for C1PVASS are decidable in \textsc{NP}. Finally, we comment on how our treatment of boundedness from the previous section adapts almost identically to C1PVASS to yield, in this case, a complexity of \textsc{coNP}.  

Both for coverability and reachability, we proceed as follows. First, we convert the given C1PVASS into a PDA $\mathcal{P}$ such that the \emph{Parikh image} of a word accepted by the PDA satisfies some quantifier-free Presburger formula $\varphi$ if and only if the C1PVASS has an accepting run. Then, we use a construction from~\cite[Theorem 4]{Seidl} (later corrected in~\cite{HagueLinFix}) to obtain, in polynomial time, an existential Presburger formula $\varphi_L$ whose models correspond to the Parikh images of words in the language of $\mathcal{P}$. The problem thus reduces to checking satisfiability of the existential Presburger formula $\varphi \land \varphi_L$. The result follows since satisfiability for such formulas is known to be \textsc{NP}-complete~\cite{HaaseSIGLOG}.

\begin{theorem}
    $k$-coverability and $k$-reachability for C1PVASS are decidable in \textsc{NP}.
\end{theorem}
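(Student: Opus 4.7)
My plan is to follow the blueprint sketched just above the theorem: build a PDA $\mathcal{P}$ from the C1PVASS $\mathcal{A}$ whose accepting runs encode an abstraction of accepting runs of $\mathcal{A}$, encode the additional continuous-scaling requirement together with the lower-bound guards as a quantifier-free Presburger formula $\varphi$ on the Parikh counts of $\mathcal{P}$, and then combine with the existential Presburger formula $\varphi_L$ produced by Seidl's construction. Since satisfiability of existential Presburger is \textsc{NP}-complete, the whole procedure is in \textsc{NP}.

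To construct $\mathcal{P}$, I would first nondeterministically guess a linear order $s_{j_1} \prec \dots \prec s_{j_m}$ of those positive-lower-bound states that are first-visited along the accepting run; this guess needs $O(|S|\log|S|)$ bits. The state space of $\mathcal{P}$ is $S \times \{0,\dots,m\}$, where the second component is a ``phase'', advancing from $p$ to $p+1$ precisely on the transition of $\mathcal{A}$ that first enters $s_{j_{p+1}}$, blocking any transition that would first-enter a different yet-unseen lower-bound state, and preserving the phase otherwise. The alphabet of $\mathcal{P}$ consists of two letters $a_p^+, a_p^-$ per phase, one emitted on each $+1$ update and one on each $-1$ update taken while in phase $p$ (the $+0$ updates emit nothing).

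With Parikh-image variables $n_p^+, n_p^-$, the formula $\varphi$ then expresses the existence of real scaling factors in $(0,1]$ such that (i) for each $p \in \{1,\dots,m\}$, the counter value at the first visit to $s_{j_p}$---a sum of per-phase contributions, each lying in a half-open interval determined by $n_q^\pm$---is at least $\ell(s_{j_p})$; and (ii) the total counter change either covers $k$ or equals $k$. Concretely, if $\alpha_q$ denotes the net scaled contribution of phase $q$, which can realise any value in the half-open interval $(-n_q^-, n_q^+)$ (with appropriate boundary cases when one of $n_q^\pm$ is zero), condition (i) becomes $\alpha_0 + \dots + \alpha_{p-1} \ge \ell(s_{j_p})$ and (ii) becomes the analogous condition on $\alpha_0 + \dots + \alpha_m$; these translate into quantifier-free linear constraints on the $n_p^\pm$ after a small case split on which of them are zero. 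Combined with $\varphi_L$ from Seidl's construction, the result is an existential Presburger formula whose satisfiability is checked in \textsc{NP}.

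The main obstacle I foresee is the boundary behaviour of continuous scaling: because $\gamma \in (0,1]$ is half-open, each interval $(-n_p^-, n_p^+)$ has an open upper endpoint, which forces careful bookkeeping of strict versus non-strict inequalities in $\varphi$. One must also verify that the global scaling-feasibility problem genuinely decomposes over phases---i.e., that if each phase-boundary guard and the final target are individually achievable as above, a single global choice of scalings simultaneously realises all of them while keeping the counter nonnegative throughout. As in \Cref{lem:cover-reach-equiv} and Remark~\ref{rem:first-nonzero-is-pos}, this decomposition should follow from the independence of per-phase scalings together with the ability to shrink scalings on $-1$ transitions enough to maintain nonnegativity inside each phase, and to argue that revisits to already-first-visited lower-bound states are automatically covered because the prefix $+1$-count is monotone in the run position.
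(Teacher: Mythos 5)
Your overall pipeline (a PDA plus a quantifier-free Presburger constraint on its Parikh image, combined with Seidl's $\varphi_L$ and the \textsc{NP}-completeness of existential Presburger arithmetic) is exactly the paper's, but the way you discharge the lower-bound guards has a genuine gap: you enforce the guard of each guarded state only at its \emph{first} visit and claim that revisits ``are automatically covered because the prefix $+1$-count is monotone.'' The guard constrains the \emph{counter value}, not the number of $+1$ updates seen so far, and the counter is not monotone. Concretely, take states $s_0,s,t,f$ with $\ell(s)=1$ and all other guards $0$, and transitions $s_0 \xrightarrow{+1} s \xrightarrow{-1} t \xrightarrow{+0} s \xrightarrow{+0} f$ (no stack updates). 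To enter $s$ the first time the $+1$ must be scaled by exactly $1$, so the counter there equals $1$; the subsequent $-1$ must be scaled by some $\gamma>0$, so at the second visit to $s$ the counter is $1-\gamma<1=\ell(s)$ and the run is invalid --- $f$ is not coverable. Your construction accepts it: the single first-visit check $n_0^+\ge 1$ passes and nothing constrains the revisit. This is precisely why the paper's PDA does not track first visits but instead partitions every position of the run into blocks according to where the current counter value sits relative to the thresholds $\ell_0<\dots<\ell_m$ and whether it is integral; each block contains only copies of states whose guard is below the relevant threshold, so \emph{every} visit is checked structurally by the PDA, and the Presburger formula only certifies that the block transitions taken were feasible.

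A second, smaller concern: for reachability you treat the exact-value condition as merely replacing $\ge k$ by $=k$ on the total. You do flag the open-endpoint bookkeeping, but the real difficulty (which the paper resolves with its dense normal form and a substantially larger layered PDA) is that hitting $k$ exactly forces some negative updates to be scaled fully, and those fully-scaled $-1$'s interact with intermediate guards in a way that does not decompose into independent per-phase intervals $(-n_q^-,n_q^+)$; one must control \emph{where} along the run the fully-scaled positive and negative updates and the mutually cancelling fractional updates occur. Until the revisit issue and this interaction are handled, the proposed reduction is not sound.
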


\subsection{$k$-coverability for C1PVASS}\label{sec:cover-C1PVASS}
Note that, for C1PVASS, $k$-coverability is equivalent to \emph{state reachability}, i.e. without asking for the final counter value to be at least some given value: to check $k$-coverability, we add a new final state with lower-bound guard $k$ and transitions from the old final state(s) to this new state with $+0$ counter updates and no stack update. Because of this, we will focus on state reachability as acceptance condition and omit $k$ when speaking of coverability in the sequel.

Let $\mathcal{A}=(S,\Gamma,\delta,\bot,s_0,f,\ell)$ be the C1PVASS. Recall that $\ell:S\rightarrow\mathbb{N}$ is the mapping from states to the lower bounds on those states. That is, $\ell(s)=x$ implies that the counter value must be at least $x$ in order to enter the state $s$. Let $n=\vert S\vert$ be the number of states. We have the assumption, from \cref{lem:3_counter_updates}, that the only counter updates in the C1PVASS are in the set $\Set{-1,+0,+1}$. Let $m+1\leq n$ be the size of the range of $\ell$. That is, there are $m+1$ distinct lower bounds $0=\ell_0<\ell_1<\ell_2<\dots<\ell_m$ that occur in the C1PVASS $\mathcal{A}$. Note that $0$ must be one of the lower bounds since $\ell(s_0)=0$ in order for any run to exist.

Now, we construct the PDA followed by the Presburger formula.
The PDA $\mathcal{P}$ has $4(m+1)$ ``blocks'' and its alphabet is $\Sigma = \{a_i, a'_i, b_i \mid 0 \leq i \leq m + 1\}$. Each block is a subPDA (so, we ignore counter updates) of the C1PVASS with some restrictions. For each $0 \leq i \leq m$, the 4 types of blocks we use all have copies of the same set of states: all $s \in S$ such that $\ell(s) \leq \ell_i$. 
\begin{enumerate}
    \item \cgreencolorA{i}{} The transitions come from those in $\mathcal{A}$ with counter update $+0$ and they all read $\epsilon$ in $\mathcal{P}$;
    \item \cgreencolorA{i}{+} Here, from the $+0$ and $+1$ transitions in $\mathcal{A}$ and the PDA $\mathcal{P}$ reads an $a_i$ on the $+1$ transitions;
    \item \cbluecolorA{i}{-} Here, from the $+0$ and $-1$ transitions in $\mathcal{A}$ and the PDA $\mathcal{P}$ reads an $b_{i}$ on the $-1$ transitions;
    \item \credcolorA{i}{\pm} And here, from the $-1$, $+0$ and $+1$ transitions in $\mathcal{A}$ and the PDA will read $b_i$ on $-1$ and $a_i$ on the $+1$ transitions.
\end{enumerate}

\begin{figure*}[h]
    \centering
    \begin{tikzpicture}[shorten >=1pt,auto,node distance=3.5cm, scale = 0.7, transform shape]
        \tikzstyle{greenblock} = [shape=rectangle, draw=green, minimum size=2cm,fill=green!20]
        \tikzstyle{blueblock} = [shape=rectangle, draw=blue, minimum size=2cm,fill=blue!20]
        \tikzstyle{redblock} = [shape=rectangle, draw=red, minimum size=2cm,fill=red!20]


        \node[greenblock,accepting,label={[label distance=.1cm]330:\cgreencolorA{i}{}}](p0)[]{$+0:\epsilon$};
        \node[greenblock,accepting,label={[label distance=.1cm]210:\cgreencolorA{i}{+}}](p1)[right=2cm of p0]{$+1:a_i$};
        \node[greenblock](p2)[right=of p1,label={[label distance=.1cm]330:\cgreencolorA{i+1}{}}]{};
        \node[blueblock,accepting](d0)at($(p0.300)+(300:2cm)$)[label={[label distance=.1cm]330:\cbluecolorA{i-1}{-}}]{$-1:b_i$};
        \node[redblock,accepting](d1)at($(d0.300)+(300:2cm)$)[label={[label distance=.1cm]210:\credcolorA{i}{\pm}}]{$-1:b_i,+1:a_i$};
        \node[redblock](d2)[right=of d1,label={[label distance=.1cm]330:\credcolorA{i+1}{\pm}}]{};        


        \node (Y) [draw=black, fit= (p0) (p1) (d1), inner sep=0.2cm, fill=yellow!20, fill opacity=0.2, dashed] {};
        
        \path[->]
            (p0) edge node [above,align=center] {$+1$} node [below,align=center] {$a_i$} (p1)
            (p1) edge node [above,align=center] {$+1$} node [below,align=center] {$a_i'$} (p2)
            (p0) edge [bend left] node [above,align=center] {$+1$} node [below,align=center] {$a_i'$} (p2)
            (p0) edge node [right,align=center] {$-1$} node [left,align=center] {$b_i$} (d0)
            (d0) edge node [right,align=center] {$+1$} node [left,align=center] {$a_i$} (d1)
            (p1) edge [bend left] node [left,align=center] {$-1$} node [right,align=center] {$b_i$} (d1)
            (d1) edge node [above,align=center] {$+1$} node [below,align=center] {$a_i'$} (d2)
        ;
        \end{tikzpicture}
        \caption{A slice of the PDA $\mathcal{P}$ constructed for $k$-coverability of a C1PVASS. The subscript being $i$ for $0\leq i\leq m$ of a block (for example, $i$ in \cgreencolorA{i}{}) denotes that all the states in the block have lower bounds at most $\ell_i$. Note $+0 : \epsilon$ is omitted unless it is the only option for transitions in the block.}
    \label{fig:C1PVASS-lb-to-PDA}
\end{figure*}
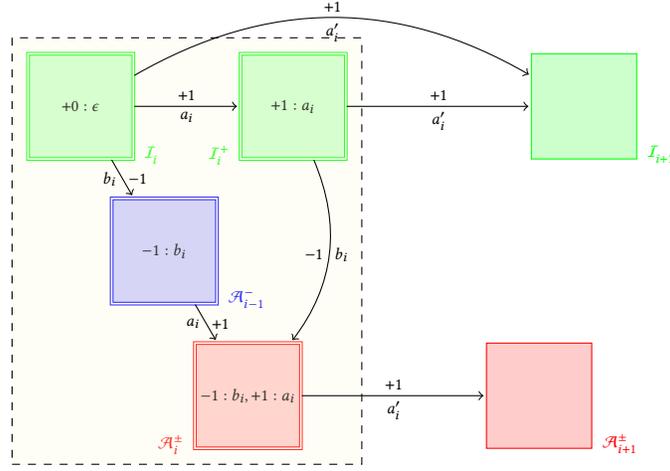

\noindent
\cref{fig:C1PVASS-lb-to-PDA} depicts how the blocks are connected in  what we will henceforth call a \emph{slice}, i.e. $\cgreencolorA{i}{}$, $\cgreencolorA{i}{+}$, $\cbluecolorA{i-1}{-}$, and $\credcolorA{i}{\pm}$, for some $0 \leq i \leq m$. It also shows how the slices themselves are connected.
Note that the transitions in the PDA do not actually have the counter updates $-1,+0,+1$, but we include them in the explanation and figure for clarity.  The accepting states of $\mathcal{P}$ are all the copies of accepting states in $\mathcal{A}$.

Now, we define a Presburger formula for the Parikh images of accepting runs in $\mathcal{P}$ that correspond to the accepting runs in $\mathcal{A}$ ($\#_a$ denotes the number of $a$'s read during the run).
\begin{equation}\label{eqn:presb}
        \begin{aligned}
            \bigwedge_{k=1}^m\left(
            \begin{aligned}
                \big(\#_{a_{k-1}'}&=0\big)\vee\\
                \bigg(\Big(\sum_{i=0}^{k-1}\#_{a_i}+\#_{a_i'}\geq \ell_k\Big)&\wedge\Big(\sum_{i=0}^{k-1}\#_{b_i}=0\Big)\bigg)\vee\\
                \bigg(\Big(\sum_{i=0}^{k-1}\#_{a_i}+\#_{a_i'}>\ell_k\Big)&\wedge\Big(\sum_{i=0}^{k-1}\#_{b_i}>0\Big)\bigg)
            \end{aligned}
            \right)
        \end{aligned}
    \end{equation}


\begin{theorem}\label{thm:C1PVASS+lb_to_PDA+Pres}
    For all runs $\pi$ in $\mathcal{A}$, there is one in $\mathcal{P}$ with the same sequence of states whose Parikh image satisfies the Presburger formula from \cref{eqn:presb} if and only if $\pi$ is accepting in $\mathcal{A}$.
\end{theorem}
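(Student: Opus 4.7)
The proof is a two-way simulation. Write $\pi = q_0 \dots q_n$ with $q_t = (s_t, \alpha_t, c_t)$, and for each position $t$ let $k(t)$ be the unique index with $\ell_{k(t)} = \max_{t' \leq t} \ell(s_{t'})$; this quantity is non-decreasing along the run and tracks the ``slice'' of $\mathcal{P}$ in which position $t$ must live. The slices of $\mathcal{P}$ are connected by $a_k'$ transitions, which will correspond exactly to the $+1$-updates of $\pi$ at which $k(\cdot)$ increments.

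For the forward direction, given an accepting $\pi$ with scaling factors $\gamma_1, \dots, \gamma_n$, the plan is to lift each position $t$ to a block inside slice $k(t)$ by tracking the sign-history of the $\pm 1$-updates since the last slice boundary: place $t$ in $\mathcal{I}_{k(t)}$ if only $+0$-updates have occurred since the boundary, in $\mathcal{I}_{k(t)}^+$ if some $+1$ but no $-1$, in $\mathcal{A}_{k(t)-1}^-$ if some $-1$ but no $+1$, and in $\mathcal{A}_{k(t)}^\pm$ once both signs have been seen. The intra-slice edges depicted in \cref{fig:C1PVASS-lb-to-PDA} are exactly those allowed by this classification, so the lift is a valid run of $\mathcal{P}$. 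The resulting Parikh image satisfies: $\#_{a_i}$ counts the $+1$-updates strictly within slice $i$, $\#_{a_i'}$ counts the $+1$-updates crossing from slice $i$ into slice $i+1$, and $\#_{b_i}$ counts the $-1$-updates within slice $i$. To verify \cref{eqn:presb}, fix $k$ with $\#_{a_{k-1}'} > 0$ and consider the first $t$ with $\ell(s_t) = \ell_k$; then $c_t \geq \ell_k$, and writing $P$ and $N$ for the numbers of $+1$ and $-1$ updates prior to $t$, one has $c_t \leq P - \varepsilon N$ where $\varepsilon > 0$ is the minimum scaling among the $-1$-updates. This forces $P \geq \ell_k$, with strict inequality whenever $N > 0$, matching disjuncts 2 and 3 respectively.

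For the backward direction, suppose a run in $\mathcal{P}$ has Parikh image satisfying \cref{eqn:presb}; its state sequence is already a syntactically valid path in $\mathcal{A}$, so the task is to produce scaling factors $\gamma_t \in (0,1]$ making it a legal run of $\mathcal{A}$. The plan is to choose a uniform scaling $\alpha_k$ for all $+1$-updates within slice $k$ and a uniform $\beta_k$ for all $-1$-updates within slice $k$. Using the disjuncts of $\varphi$ for each $k$, I pick the $\alpha$s to bring the slice-$k$ entry counter either to exactly $\ell_k$ (disjunct 2, when no $-1$s occurred before) or to strictly above $\ell_k$ (disjunct 3, when some did); in the latter case I then pick $\beta_{k-1}$ and earlier $\beta$s small enough to preserve that strict margin. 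Inside slice $k$ itself, any $-1$ occurring in an $\mathcal{A}_{k-1}^-$ or $\mathcal{A}_k^\pm$ block can be scaled arbitrarily close to $0$, so the guard $c_t \geq \ell_k$ is preserved throughout the slice provided the entry is strictly above $\ell_k$; disjunct 3's positivity of $\#_{b_i}$ is exactly the hypothesis that triggers the strict entry, so the two conditions align.

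The main obstacle is converting these per-slice local arguments into a single globally feasible choice of scalings: the vector $(\alpha_0, \beta_0, \dots, \alpha_m, \beta_m)$ must simultaneously respect every guard along the run. I will package the required inequalities into a linear feasibility problem over these $2(m{+}1)$ variables, and use the integer slack afforded by the strict inequalities in disjunct 3 (the Parikh counts are integers, so ``$>\ell_k$'' in fact yields ``$\geq \ell_k + 1$'' on the unscaled sums) to conclude that a strictly positive rational solution exists. This solution provides the required scaling factors $\gamma_t$, completing the equivalence.
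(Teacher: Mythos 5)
Your forward direction hinges on the wrong invariant, and this is a genuine gap rather than a stylistic difference. You assign position $t$ to slice $k(t)$ where $\ell_{k(t)}=\max_{t'\le t}\ell(s_{t'})$, and you assert that the slice-crossing letters $a_k'$ correspond exactly to the $+1$-updates at which $k(\cdot)$ increments. But $k(\cdot)$ increments whenever the run first enters a state with a larger lower bound than any seen before, and that transition can carry a $+0$ or a $-1$ update (e.g.\ the counter climbs to $10$ while visiting only guard-$0$ states, and only then does a $+0$ transition enter a state with guard $5$); it can also jump several indices at once. In $\mathcal{P}$ the only way to pass from slice $i$ to slice $i+1$ is a transition reading $a_i'$, which exists only for $+1$-updates of $\mathcal{A}$, so in these cases your lift is simply not a run of $\mathcal{P}$. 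The paper instead first normalises the scalings (\cref{lem:extremal_runs}: every $+1$ scaled to $1$, all $-1$'s scaled so their total is $-\varepsilon>-1$) and then assigns position $j$ to a slice according to the interval $[\ell_i,\ell_{i+1})$ containing the \emph{counter value} $c_j$; under that normalisation the counter crosses a threshold $\ell_{i+1}$ only on a fully scaled $+1$-update, which is exactly what makes the $a_i'$ correspondence work, and the integer/non-integer distinction together with ``has the counter reached $\ell_{i+1}$ before'' is what separates \cgreencolorA{i}{} and \cgreencolorA{i}{+} from \cbluecolorA{i}{-} and \credcolorA{i}{\pm}. Your sign-history classification also mismatches the blocks' intended meaning: \cbluecolorA{i-1}{-} is entered only from \cgreencolorA{i}{}, i.e.\ only after the counter has reached $\ell_i$, not merely after ``some $-1$ but no $+1$ since the last boundary''.

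Your backward direction is sound in spirit but over-engineered. Since only lower-bound guards are present, coverability never requires a feasibility system over per-slice scalings: the single extremal choice ($\gamma=1$ on every $+1$, arbitrarily small $\gamma$'s on the $-1$'s) dominates every other choice pointwise, so it suffices to verify the guards for that one scaling, which is what the paper does block by block. In particular there is no reason to bring the slice-$k$ entry value to ``exactly $\ell_k$'' in the disjunct-2 case, and packaging the constraints as a linear program obscures the fact that the argument is purely monotone.
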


\noindent
Before proving this, we give some auxiliary lemmas.

\begin{lemma}\label{lem:extremal_runs}
    Let $0<\varepsilon<1$. For any run $\pi$ in $\mathcal{A}$, there is another run $\pi'$ with the same sequence of states such that all the $+1$ counter updates in the run are scaled up to $1$ and all the $-1$ updates in the run are scaled down so as to add up to $-\varepsilon$.
\end{lemma}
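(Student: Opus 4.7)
The plan is to construct $\pi'$ by keeping the same sequence of states (and hence the same transitions, stack operations, and signs of counter updates) as $\pi$, but reassigning the scaling factors as follows: set $\gamma'_j = 1$ on every $+1$ transition, $\gamma'_j = \varepsilon/N$ on every $-1$ transition, where $N$ is the total number of $-1$ transitions along $\pi$, and leave the $\gamma'_j$ on $+0$ transitions arbitrary. (If $N=0$ the negative side is vacuous; otherwise $\varepsilon/N \in (0,1]$ so these are admissible.) By construction every $+1$ is scaled up to $1$, and the $-1$ transitions contribute $-N \cdot (\varepsilon/N) = -\varepsilon$ in total, matching the two requirements of the lemma.

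What remains is to verify that $\pi'$ is a valid run, i.e.\ that at every index $i$ the new counter value $c'_i$ satisfies $c'_i \geq \ell(s_i)$ (nonnegativity then follows since $\ell \geq 0$). Let $P_i$ and $N_i$ denote the numbers of $+1$ and $-1$ transitions taken before position $i$. The construction gives $c'_i = P_i - N_i \varepsilon/N$, which lies in the interval $[P_i - \varepsilon,\, P_i]$. Because every scaling factor in $\pi$ is at most $1$, the original counter satisfies $c_i \leq P_i$, so combining this with $c_i \geq \ell(s_i)$ yields the crucial integer inequality $P_i \geq \ell(s_i)$.

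The one delicate step, which I expect to be the main obstacle, is that $\ell(s_i)$ is an integer while $\varepsilon < 1$, so the bound $c'_i \geq P_i - \varepsilon$ is strictly above $\ell(s_i)$ only when $P_i \geq \ell(s_i) + 1$; this immediately handles the generic case. The tight case $P_i = \ell(s_i)$ needs a separate argument: there, the chain $\ell(s_i) \leq c_i \leq \sum(\text{positive } \gamma_j) \leq P_i = \ell(s_i)$ must collapse to equalities, which forces every positive $\gamma_j$ before step $i$ to equal $1$ and the sum of the negative $\gamma_j$ before step $i$ to equal $0$, hence $N_i = 0$. Thus $c'_i = P_i = \ell(s_i)$ exactly, the guard is satisfied, and $\pi'$ is a valid run with the same state sequence as $\pi$, which proves the lemma.
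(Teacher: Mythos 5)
Your proposal is correct, and the construction is the same as the paper's: keep the state sequence, set every positive scaling to $1$, and spread $\varepsilon$ uniformly over the $N$ negative transitions. Where you genuinely diverge is in the verification. The paper's proof sketch asserts that the counter values of $\pi'$ dominate those of $\pi$ pointwise and concludes that the lower bounds are therefore still met; but as your analysis implicitly reveals, this domination can fail: if the original negative scalings summed to some $T<\varepsilon$ (e.g.\ a single $-1$ scaled by $0.001$ while $\varepsilon=0.9$), then $c'_i = P_i - N_i\varepsilon/N$ can be strictly below $c_i$. Your argument routes around this by working only with the bounds $c'_i \geq P_i-\varepsilon$ and $\ell(s_i)\leq c_i\leq P_i$, and then exploiting the integrality of both $P_i$ and $\ell(s_i)$: either $P_i\geq \ell(s_i)+1$, in which case $P_i-\varepsilon>\ell(s_i)$, or $P_i=\ell(s_i)$, in which case the collapse of the chain $\ell(s_i)\leq c_i\leq \Sigma^+_i\leq P_i$ forces $\Sigma^-_i=0$, hence $N_i=0$ (each negative factor being strictly positive) and $c'_i=\ell(s_i)$ exactly. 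This tight-case analysis is precisely the step the paper's one-line justification glosses over, so your version is the more robust of the two; it buys a proof that works for \emph{every} fixed $\varepsilon\in(0,1)$ rather than only for $\varepsilon$ small relative to the original run's negative mass. The only cosmetic remark is that the claim ``every positive $\gamma_j$ equals $1$'' in the tight case is true but not needed; only $\Sigma^-_i=0$ is used.
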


\noindent
The proof follows from a few simple observations: Since $\pi$ is a run and $\pi'$ is a run with the same sequence of states, the stack will behave the same in both runs. For the counter, since we only have lower bounds, and we chose small coefficients for negative updates, all the counter updates in $\pi'$ are greater than, or equal to the counter updates in $\pi$, hence satisfying all lower bounds along the run.
For the remainder of the section, we therefore assume that all runs of $\mathcal{A}$ have $+1$s scaled up to $1$ and $-1$s scaled down so that they add up to $-\varepsilon > -1$.

\begin{lemma}\label{lem:int_nonint_states}
    A run ends in a green state (i.e., a state in \cgreencolorA{i}{} or \cgreencolorA{i}{+} for some $0\leq i\leq m$) in $\mathcal{P}$ if and only if there was no $b_j$ read along the run for all $0\leq j\leq m$. 
\end{lemma}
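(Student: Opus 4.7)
The plan is to prove both directions of the equivalence by structural inspection of the PDA $\mathcal{P}$ constructed just above and depicted in \cref{fig:C1PVASS-lb-to-PDA}. The argument rests on two observations: one about which transitions of $\mathcal{P}$ can read a letter of the form $b_j$, and one about which blocks can be reached from which.

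First I would note from the construction that a transition of $\mathcal{P}$ reads some $b_j$ precisely when it corresponds to a $-1$ counter update in $\mathcal{A}$. By inspection of the four block types, such $-1$ transitions appear only (i) inside a blue block $\cbluecolorA{j-1}{-}$, (ii) inside a red block $\credcolorA{j}{\pm}$, (iii) on an inter-block edge from green $\cgreencolorA{j}{}$ to blue $\cbluecolorA{j-1}{-}$, or (iv) on an inter-block edge from green $\cgreencolorA{j}{+}$ to red $\credcolorA{j}{\pm}$. In every case, the target of the transition lies in a blue or red block.

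Second I would verify that $\mathcal{P}$ has no transition from a blue or red state back to a green state: within a slice, the blue block feeds only into the red block of the same slice (via the $a_i$-labelled edge $d_0 \to d_1$), and the only inter-slice edge leaving a red block $\credcolorA{i}{\pm}$ goes to the next red block $\credcolorA{i+1}{\pm}$ (via $a_i'$). Combining the two observations, the green blocks form a subgraph closed under the set of transitions that do not read any $b_j$, while the blue and red blocks together are absorbing once entered. Hence a run of $\mathcal{P}$ ends in a green state if and only if it has never taken a transition that reads some $b_j$, which is exactly the statement of the lemma. The only real obstacle here is bookkeeping through the edge list of the construction to ensure no transition has been overlooked; there is no deeper conceptual difficulty.
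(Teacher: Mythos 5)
Your proof is correct and follows essentially the same route as the paper: the paper's justification is precisely that $b_j$'s are read only on $-1$ transitions leading into blue/red blocks and that there is no path from blue or red states back to any green state. Your version merely spells out the edge-by-edge bookkeeping that the paper leaves implicit.
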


\noindent
This follows from the construction since there is no path from the blue states (states in \cbluecolorA{i}{-} for $0\leq i\leq m$) or the red states (states in \credcolorA{i}{\pm} for $0\leq i\leq m$) to any green state.
Intuitively, the counter values are integers when reaching green states and nonintegers when reaching blue or red states.

\subsubsection{Proof of \cref{thm:C1PVASS+lb_to_PDA+Pres}}

\begin{proof}[Proof ($\impliedby$)]
    We first prove the easy part, that is showing that an accepting run that satisfies the Presburger formula in $\mathcal{P}$ corresponds to an accepting run in $\mathcal{A}$ with the same sequence of states.

    Let $\pi$ be the run. We show that $\pi$ is also an accepting run in $\mathcal{A}$ with all the $+1$ counter updates scaled up to $1$ and the $-1$ updates scaled down to add up to $-\varepsilon > -1$. Since all the transitions in $\mathcal{P}$ are also transitions in $\mathcal{A}$, we just need to show that the lower bounds along the states in $\pi$ from $\mathcal{A}$ are satisfied. Let $s$ be a state in $\pi$.
    \begin{enumerate}
        \item \label{enum:1} If $s\in\cgreencolorA{i}{}$ or $s\in\cgreencolorA{i}{+}$, we know from \cref{lem:int_nonint_states} that there was no $b_j$ seen before for $0\leq j\leq m$. Since the Parikh image of $\pi$ in $\mathcal{P}$ satisfies \cref{eqn:presb}, we know that the second disjunct in the braces must hold for $k=i$. This implies that there were at least $\ell_i$ many $a$'s seen before reaching $s$. Now, we see that in the run $\pi$ in $\mathcal{A}$, before reaching $s$, there were at least $\ell_i$ many $+1$ updates (since each $a_j$ in $\mathcal{P}$ corresponds to a $+1$ update in $\mathcal{A}$) and no $-1$ updates (since each $b_j$ in $\mathcal{P}$ corresponds to a $-1$ update in $\mathcal{A}$). Moreover, the lower bound at $s$ can be at most $\ell_i$ by the definition of the blocks. Hence, the lower bound on $s$ is satisfied.
        \item \label{enum:2} If $s\in\cbluecolorA{i-1}{-}$, it means that there was at least one $b_i$ seen before reaching $s$ since we know that the only way to enter the block \cbluecolorA{i-1}{-} is from \cgreencolorA{i}{}. We know that there are at least $\ell_i$ many $a$'s seen before entering \cgreencolorA{i}{} (from \cref{enum:1}). Note that after entering \cgreencolorA{i}{} until reaching $s$, there are only $+0$ and $-1$ update transitions, of which, the $-1$ updates can be scaled down to $-\varepsilon$. Hence, the counter value at $s$ will be in the interval $(\ell_i-1,\ell_i)$. The lower bound on $s$ can be at most $\ell_i-1\geq\ell_{i-1}$. Hence, the lower bound is satisfied.
        \item \label{enum:3} If $s\in\credcolorA{i}{\pm}$, we consider the 3 possible ways to enter \credcolorA{i}{\pm}:
        \begin{itemize}
            \item From \credcolorA{i-1}{\pm}: In this case, we know that the run, which started in a green state, must have seen a $b_j$ for some $0\leq j<i$ before entering \credcolorA{i-1}{\pm}, hence, the third disjunct in the braces of \cref{eqn:presb} must hold for $k=i$. Hence, there must be at least $\ell_i+1$ many $+1$ updates seen before entering \credcolorA{i}{\pm}. Since all the $-1$ updates can be scaled down to $-\varepsilon$, the counter value for all the states in \credcolorA{i}{\pm} will be in the interval $(\ell_i,\ell_i+1)$. We know that the lower bounds of all the states in \credcolorA{i}{\pm} are at most $\ell_i$, hence, the lower bound of $s$ is satisfied.
            \item From \cgreencolorA{i}{+}: In this case, we see that the only way to enter \cgreencolorA{i}{+} is by a $+1$ update from \cgreencolorA{i}{}, and we know from \cref{enum:1} that the counter values for such runs are at least $\ell_i$, which means that the counter values when the run is in \cgreencolorA{i}{+} is at least $\ell_i+1$ since there are no negative updates in the green states. We can now repeat the argument in the previous bullet to argue that the lower bound in $s$ is satisfied.
            \item From \cbluecolorA{i-1}{-}: In this case, we know from \cref{enum:2} that there were at least $\ell_i$ many $+1$ updates seen before entering \cbluecolorA{i-1}{-}. Now, since we need a $+1$ update to go from \cbluecolorA{i-1}{-} to \credcolorA{i}{\pm}, we can repeat the same argument as before to conclude the lower bound at $s$ is satisfied.\qedhere
        \end{itemize}
    \end{enumerate} 
\end{proof}

\begin{proof}[Proof ($\implies$)]
    Let $\pi=q_0\dots q_n$ be an accepting run in $\mathcal{A}$, where all the $+1$ updates are scaled up to $1$ and all the $-1$ updates are scaled down to $-\delta$ so that they all add up to $-\varepsilon > -1$. Note that the counter values, $c_i$ in $q_i = (s_i,\alpha_i,c_i)$, along the run will either be integers or they will be $p-k\delta$ for some $p,k\in\mathbb{N}$ where $0<k\delta \leq \varepsilon < 1$.
    We have to show that there is a run in $\mathcal{P}$ with the same sequence of states such that the Parikh image of this run satisfies the Presburger formula. Technically, we will prove the following.
    \begin{claim}
        There is an accepting run $\pi' = (s'_0,w_0,\alpha_0)\dots(s'_n,w_n,\alpha_n)$ in $\mathcal{P}$ such that, for all $0 \leq j \leq n$, the following hold:\footnote{Here, for notational convenience, we are defining $\ell_{m+1} = \infty$.}
        \begin{enumerate}
            \item if $c_j=\ell_i$ for some $0\leq i\leq m$, then $s'_j \in\cgreencolorA{i}{}$;
            \item if $\ell_i<c_j<\ell_{i+1}$ for some $0\leq i\leq m$ and $c_j$ is an integer, then $s'_j \in\cgreencolorA{i}{+}$;
            \item if $\ell_i<c_j<\ell_{i+1}$ for $0\leq i\leq m$ some but $c_j$ is not an integer and $c_{j'} \geq \ell_{i+1}$ for some $j' < j$, then $s\in\cbluecolorA{i}{-}$; and
            \item if $\ell_i<c_j<\ell_{i+1}$ for some $0\leq i\leq m$ but $c_j$ is not an integer and $c_{j'} < \ell_{i+1}$ for all $j' < j$, then $s\in\credcolorA{i}{\pm}$.
        \end{enumerate}
    \end{claim}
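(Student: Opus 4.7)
The plan is to construct $\pi' = (s'_0, w_0, \alpha_0) \ldots (s'_n, w_n, \alpha_n)$ inductively on $j$, defining $s'_j$ to be the copy of $s_j$ in the unique block dictated by the four cases of the claim applied to $c_j$ and the history $(c_{j'})_{j' < j}$. The base case is immediate: $c_0 = 0 = \ell_0$, so case~(1) puts $s'_0$ in $\cgreencolorA{0}{}$, matching the initial state of $\mathcal{P}$. At each inductive step, $\alpha_j$ agrees with the stack in $\pi$, since every transition in $\mathcal{P}$ carries the same stack update as the corresponding transition in $\mathcal{A}$. Acceptance will follow because $s_n$ is accepting in $\mathcal{A}$ and the copies of accepting states in every block are declared accepting in $\mathcal{P}$.

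The real work is showing that a transition $(s'_{j-1}, s'_j)$ exists in $\mathcal{P}$ for each $j$. I would proceed by case analysis on the block of $s'_{j-1}$ and on the (scaled) counter change $u_j = c_j - c_{j-1} \in \{0, +1, -\delta\}$, where $\delta > 0$ is the common scale used by \cref{lem:extremal_runs} for every $-1$ update. A $+0$ update keeps the block and uses the internal $+0 : \epsilon$ edge. A $+1$ out of a green block lands at an integer: take $+1 : a_i$ to $\cgreencolorA{i}{+}$ when $c_j < \ell_{i+1}$ and $+1 : a_i'$ to $\cgreencolorA{i+1}{}$ when $c_j = \ell_{i+1}$. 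A $-1$ out of green leaves the integers and enters blue from $\cgreencolorA{i}{}$ (via $-1 : b_i$ to $\cbluecolorA{i-1}{-}$) or red from $\cgreencolorA{i}{+}$ (via $-1 : b_i$ to $\credcolorA{i}{\pm}$). The outgoing edges of blue and red blocks shown in \cref{fig:C1PVASS-lb-to-PDA} similarly cover the remaining sub-cases, with $+1 : a_i$ lifting blue into red and $+1 : a_i'$ advancing red across a slice.

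The main obstacle will be tracking the blue/red distinction across $+1$ updates. A $+1$ out of $\cbluecolorA{i-1}{-}$ (counter non-integer in $(\ell_{i-1}, \ell_i)$) sends the counter to a non-integer in $(\ell_{i-1} + 1, \ell_i + 1)$, and I must verify that the resulting block is consistent with whether the run previously reached $\ell_{i+1}$. The key subclaim is that the history condition is stable: once $c_{j'} \geq \ell_{i+1}$ has occurred, it remains witnessed at all later steps, so ``blue'' status is monotone and no case-(3) state ever needs to be reclassified as case-(4). Combined with the observation that no edge in $\mathcal{P}$ leaves a blue or red block for a green one, this guarantees the block chosen for $s'_j$ is always reachable from the block of $s'_{j-1}$, closing the induction.
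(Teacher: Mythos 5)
Your proposal is correct and follows essentially the same route as the paper: an inductive construction of $\pi'$ that places $s'_j$ in the block forced by the four cases, with the case analysis resting on the same two facts — that the total negative contribution along the run is less than $1$ (so integer counter values certify no prior negative update, and a $+1$ out of a blue block necessarily lands in a red block of the next slice) — the only difference being that you organize the cases by the source block and update type where the paper organizes them by the target counter value. No gap.
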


    \noindent
    Note that the constructed PDA has no block cycles, i.e. once a run leaves a block it never comes back. It follows that the claim above implies the run satisfies the formula from \cref{eqn:presb}.

    We now inductively construct $\pi'$ as required. The run $\pi$ must start with state $s_0$ thus with lower bound $0$. Hence, $s'_0$ is chosen to be the corresponding copy from \cgreencolorA{0}{}. As the inductive hypothesis, let us assume that the claim holds for all states $s'_j$ in $\pi'$ for some $0\leq j \leq n$. Now, there are the following cases for the counter value $c_{j+1}$ at state $s_{j+1}$:
    \begin{itemize}
        \item $c_{j+1}=\ell_i$ for some $0\leq i\leq m$. We want to show that $s'_{j+1}$ can be chosen from $\cgreencolorA{i}{}$. From our choice of coefficients, we know there have been no negative updates before, since $c_{j+1}$ is an integer. If $c_j=\ell_i$, we know by induction hypothesis that $s'_j\in\cgreencolorA{i}{}$. We also know that the lower bound on both $c_j$ and $c_{j+1}$ must be at most $\ell_i$ since $\pi$ is a valid run in $\mathcal{A}$. Since the counter update on the transition must have been $+0$ in $\mathcal{A}$, then by construction of $\mathcal{P}$, $s'_{j+1} \in \cgreencolorA{i}{}$ too and there is a transition to it from $s'_j$ in $\mathcal{P}$ that reads $\epsilon$. The only other option for the counter update on the transition from $s_j$ to $s_{j+1}$ in $\mathcal{A}$ is $+1$, that is, $c_j=\ell_i-1$. By induction hypothesis, $s'_j$ must be in $\cgreencolorA{i-1}{}$ (if $\ell_{i-1}=\ell_i-1$) or in $\cgreencolorA{i-1}{+}$ (if $\ell_{i-1}<\ell_i-1$). Then, there is a transition to $s'_{j+1} \in \cgreencolorA{i}{}$ from $s'_j$ since all the $+1$ transitions from $\mathcal{A}$ have copies from $\cgreencolorA{i-1}{}$ and $\cgreencolorA{i-1}{+}$ (reading $a'_{i-1}$) to $\cgreencolorA{i}{}$ in the construction.

        \item $\ell_i<c_{t+1}<\ell_{i+1}$ for some $0\leq i\leq m$ is an integer. We want to show that $s'_{j+1}$ can be chosen from $\cgreencolorA{i}{+}$. Again, from our choice of coefficients, we know that there could not have been any negative updates before. If the counter update on the transition from $s_j$ to $s_{j+1}$ in $\mathcal{A}$ is a $+0$, then $\ell_i<c_j=c_{j+1}<\ell_{i+1}$ is also an integer and, by our hypothesis, $s'_j\in\cgreencolorA{i}{+}$.
        %
        By construction, the $+0$ transition from $s_j$ to $s_{j+1}$ has a copy in $\cgreencolorA{i}{+}$ that leads to $s'_{j+1} \in \cgreencolorA{i}{+}$ reading $\epsilon$. If the counter update on the transition were a $+1$ instead, then $c_j+1=c_{j+1}$ and $s'_j$ would be in $\cgreencolorA{i}{}$ (if $c_j=\ell_i$) or in $\cgreencolorA{i}{+}$ (if $\ell_i<c_j<\ell_{i+1}-1$), and in both cases, by construction, it will have a transition to $s'_{j+1}$ in $\cgreencolorA{i}{+}$ reading $a_i$.
        
        \item $\ell_i<c_{j+1}<\ell_{i+1}$ for some $0\leq i\leq m$ is not an integer and the counter value reached $\ell_{i+1}$ before. We want to show that $s'_{j+1}$ can be chosen from $\cbluecolorA{i}{-}$. Note that the counter update on the transition from $s_j$ to $s_{j+1}$ in $\pi$ could not have been a $+1$ since that would mean that $c_j<\ell_{i+1}-1$ which contradicts the fact that the counter value reached $\ell_{i+1}$ before: Indeed, recall that the absolute sum of all negative updates in $\pi$ is less than $1$. If the counter update on the transition was $+0$, then $\ell_i<c_j=c_{j+1}<\ell_{i+1}$ and $s'_j\in\cbluecolorA{i}{-}$, since the counter must have reached $\ell_{i+1}$ before $s_j$. Both $s_j$ and $s_{j+1}$ must have lower bounds less than $\ell_{i+1}$, that is, at most $\ell_i$ and $\cbluecolorA{i}{-}$ has copies of such transitions with $+0$ updates reading $\epsilon$. If the counter update was $-1$, then $c_j-\delta=c_{j+1}$ and $s'_j$ must be in $\cgreencolorA{i+1}{}$ (if the $-1$ update on the transition from $s_j$ to $s_{j+1}$ is the first negative update in $\pi$) or in $\cbluecolorA{i}{-}$ (if $\ell_i<c_j<\ell_{i+1}$ is not an integer and the $-1$ update on the transition is not the first negative update in $\pi$). Note that $\ell_{i+1}<c_j<\ell_{i+2}$ is not possible, neither is it possible for $c_j$ to be an integer in the latter case, all due to the choice of coefficients. In both of these cases, we see that $-1$ transitions are copied in $\mathcal{P}$ from $\cgreencolorA{i+1}{}$ to $\cbluecolorA{i}{-}$ and within $\cbluecolorA{i}{-}$ reading $b_{i+1}$.

        \item $\ell_i<c_{j+1}<\ell_{i+1}$ for some $0\leq i\leq m$ is not an integer and the counter value never reached $\ell_{i+1}$. We want to show that $s'_{j+1}$ can be chosen from $\credcolorA{i}{\pm}$.
        If the transition from $s_j$ to $s_{j+1}$ has the first negative update in $\pi$, then $\ell_i<c_j<\ell_{i+1}$ must have been an integer. Hence $s'_j$ must have been in $\cgreencolorA{i}{+}$. For the remaining cases, there have been negative update(s) before, $s'_j$ cannot be in a green block $\cgreencolorA{i}{}$ or $\cgreencolorA{i}{+}$, and $c_j$ is not an integer. If $\ell_i<c_j<\ell_{i+1}$, then $s'_j$ is in either $\cbluecolorA{i}{-}$ or in $\credcolorA{i}{\pm}$. The first case is not possible since the run would have passed through $\cgreencolorA{i+1}{}$ before entering $\cbluecolorA{i}{-}$ but we know that the counter value never reached $\ell_{i+1}$. Hence, $s'_j$ is in $\credcolorA{i}{\pm}$ and since all the transitions from $\mathcal{A}$ are copied in this block, $s'_{j+1}$ can be chosen as required. The only remaining case is that $\ell_{i-1}<c_j<\ell_i$ and the counter update on the transition is $+1$. In this case, $s_j$ is either in $\cbluecolorA{i-1}{-}$ or $\credcolorA{i-1}{\pm}$ and we can see that $\mathcal{P}$ has copies of the $+1$ transitions from both of these blocks to $\credcolorA{i}{\pm}$ as required.\qedhere
    \end{itemize}
\end{proof}

\subsection{$k$-reachability for C1PVASS}

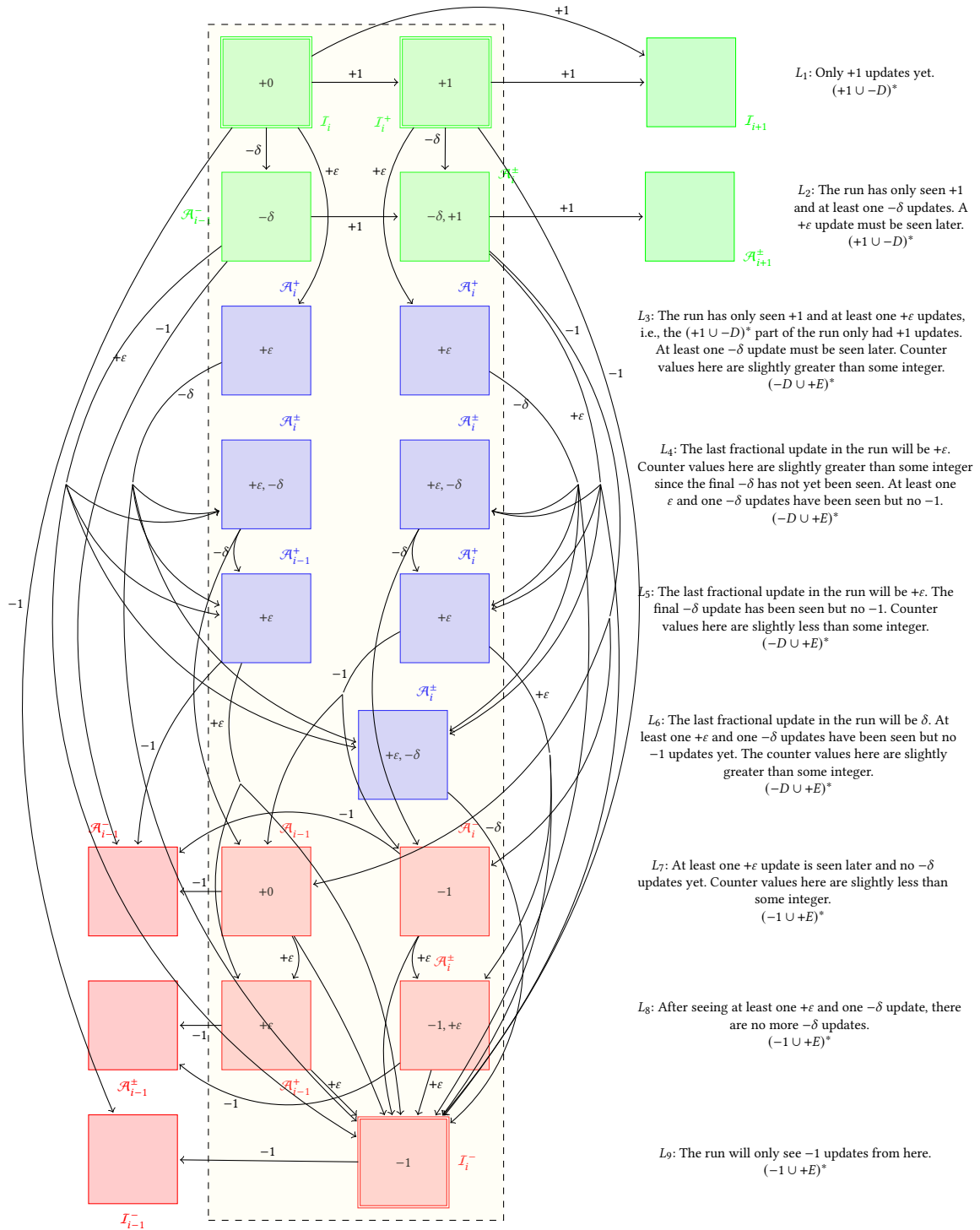
\begin{figure*}[h]
    \centering
    \begin{tikzpicture}[shorten >=1pt,auto,node distance=3.5cm, scale = 0.7, transform shape]
        \tikzstyle{greenblock} = [shape=rectangle, draw=green, minimum size=2cm,fill=green!20]
        \tikzstyle{blueblock} = [shape=rectangle, draw=blue, minimum size=2cm,fill=blue!20]
        \tikzstyle{redblock} = [shape=rectangle, draw=red, minimum size=2cm,fill=red!20]

        
        \node[greenblock,accepting,label={[label distance=.1cm]330:\rgreencolorI{i}{}}](p0)[]{$+0$};
        \node[greenblock,accepting,label={[label distance=.1cm]210:\rgreencolorI{i}{+}}](p1)[right=2cm of p0]{$+1$};
        \node[greenblock](p2)[right=of p1,label={[label distance=.1cm]330:\rgreencolorI{i+1}{}}]{};
        \node[greenblock](d0)[below=1cm of p0,label={[label distance=.1cm]180:\rgreencolorA{i-1}{-}}]{$-\delta$};
        \node[greenblock](d1)[right=2cm of d0,label={[label distance=.1cm]30:\rgreencolorA{i}{\pm}}]{$-\delta,+1$};
        \node[greenblock](d2)[right=of d1,label={[label distance=.1cm]330:\rgreencolorA{i+1}{\pm}}]{};
        \node[blueblock](e0)[below=1cm of d0,label={[label distance=.1cm]80:\rbluecolorA{i}{+}}]{$+\varepsilon$};
        \node[blueblock](e1)[right=2cm of e0,label={[label distance=.1cm]80:\rbluecolorA{i}{+}}]{$+\varepsilon$};
        \node[blueblock](ed0)[below=1cm of e0,label={[label distance=.1cm]80:\rbluecolorA{i}{\pm}}]{$+\varepsilon,-\delta$};
        \node[blueblock](ed1)[right=2cm of ed0,label={[label distance=.1cm]80:\rbluecolorA{i}{\pm}}]{$+\varepsilon,-\delta$};
        \node[blueblock](e0')[below=1cm of ed0,label={[label distance=.1cm]80:\rbluecolorA{i-1}{+}}]{$+\varepsilon$};
        \node[blueblock](e1')[right=2cm of e0',label={[label distance=.1cm]80:\rbluecolorA{i}{+}}]{$+\varepsilon$};
        \node[blueblock](ed)[below right=1.5cm of e0',label={[label distance=.1cm]80:\rbluecolorA{i}{\pm}}]{$+\varepsilon,-\delta$};
        \node[redblock](dmm)[below left=1.5cm of ed,label={[label distance=.1cm]80:\rredcolorA{i-1}{}}]{$+0$};
        \node[redblock](dm0)[right=2cm of dmm,label={[label distance=.1cm]80:\rredcolorA{i}{-}}]{$-1$};
        \node[redblock](dm1)[left=5cm of dm0,label={[label distance=.1cm]100:\rredcolorA{i-1}{-}}]{};
        \node[redblock](edmm)[below=1cm of dmm,label={[label distance=.1cm]280:\rredcolorA{i-1}{+}}]{$+\varepsilon$};
        \node[redblock](edm0)[below=1cm of dm0,label={[label distance=.1cm]above:\rredcolorA{i}{\pm}}]{$-1,+\varepsilon$};
        \node[redblock](edm1)[left=5cm of edm0,label={[label distance=.1cm]below:\rredcolorA{i-1}{\pm}}]{};
        \node[redblock,accepting](n1)[below right=1.5cm of edmm,label={[label distance=.1cm]right:\rredcolorI{i}{-}}]{$-1$};
        \node[redblock](n2)[below=1cm of edm1,label={[label distance=.1cm]below:\rredcolorI{i-1}{-}}]{};

        
        \coordinate[right=2.7cm of e1'](d1t);
        \coordinate[left=3.5cm of ed0](d0t);
        \coordinate[right=2.5cm of ed1](d1t');
        \coordinate[left=2cm of ed0](e0t);
        \coordinate[right=2cm of ed1](e1t);
        \coordinate[above right=2cm of dm1](e0't);
        \coordinate[right=2.3cm of ed](e1't);
        \coordinate[above left=.5cm of ed](e1't');
        \coordinate[above=2cm of e1't'](ed1t);


        \node (Y) [draw=black, fit= (p0) (p1) (n1), inner sep=0.2cm, fill=yellow!20, fill opacity=0.2, dashed] {};


        \node[label={[align=center]right:$L_1$: Only $+1$ updates yet.\\$(+1\cup -D)^*$}](l1)[right=1cm of p2]{};
        \node[label={[align=center]right:$L_2$: The run has only seen $+1$\\and at least one $-\delta$ updates. A\\$+\varepsilon$ update must be seen later.\\$(+1\cup -D)^*$}](l2)[right=1cm of d2]{};
        \node[label={[align=center]right:$L_3$: The run has only seen $+1$ and at least one $+\varepsilon$ updates,\\ i.e., the $(+1\cup -D)^*$ part of the run only had $+1$ updates.\\ At least one $-\delta$ update must be seen later. Counter\\values here are slightly greater than some integer.\\$(-D\cup+E)^*$}](l3)[right=3cm of e1]{};
        \node[label={[align=center]right:$L_4$: The last fractional update in the run will be $+\varepsilon$.\\Counter values here are slightly greater than some integer\\since the final $-\delta$ has not yet been seen. At least one\\$\varepsilon$ and one $-\delta$ updates have been seen but no $-1$.\\$(-D\cup+E)^*$}](l3)[right=3cm of ed1]{};
        \node[label={[align=center]right:$L_5$: The last fractional update in the run will be $+\varepsilon$. The\\final $-\delta$ update has been seen but no $-1$. Counter\\values here are slightly less than some integer.\\$(-D\cup+E)^*$}](l3)[right=3cm of e1']{};
        \node[label={[align=center]right:$L_6$: The last fractional update in the run will be $\delta$. At\\least one $+\varepsilon$ and one $-\delta$ updates have been seen but no\\$-1$ updates yet. The counter values here are slightly\\greater than some integer.\\$(-D\cup+E)^*$}](l4)[right=4cm of ed]{};
        \node[label={[align=center]right:$L_7$: At least one $+\varepsilon$ update is seen later and no $-\delta$\\updates yet. Counter values here are slightly less than\\some integer.\\$(-1\cup+E)^*$}](l5)[right=3cm of dm0]{};
        \node[label={[align=center]right:$L_8$: After seeing at least one $+\varepsilon$ and one $-\delta$ update, there\\are no more $-\delta$ updates.\\$(-1\cup+E)^*$}](l6)[right=3cm of edm0]{};
        \node[label={[align=center]right:$L_9$: The run will only see $-1$ updates from here.\\$(-1\cup +E)^*$}](l7)[right=4.4cm of n1]{};

        
        \path[-]
        (d1)    edge [bend left=30]node[right,align=center,near start]{$-1$}(d1t)
                edge [bend left=20]node[left,align=center,near end]{$+\varepsilon$}(d1t')
        (d0)    edge [bend right=20]node[below,align=center]{$+\varepsilon$}(d0t)
        (e0)    edge [bend right=30]node[below,align=center,near start]{$-\delta$}(e0t)
        (e1)    edge [bend left=20]node[below,align=center,near start]{$-\delta$}(e1t)
        (e0')    edge [bend right=20]node[left,align=center]{$+\varepsilon$}(e0't)
        (e1')    edge [bend left=20]node[right,align=center]{$+\varepsilon$}(e1't)
        (e1')    edge [bend right=20]node[left,align=center,near end]{$-1$}(e1't')
        ;
             
        \path[->]
        (p0)    edge node[above,align=center]{$+1$}(p1)
                edge [bend left]node[above,align=center,near end]{$+1$}(p2)
                edge node[left,align=center]{$-\delta$}(d0)
                edge [bend right]node[left,align=center]{$-1$}(n2)
                edge [bend left=35]node[right,align=center,near start]{$+\varepsilon$}(e0)
        (p1)    edge node[above,align=center]{$+1$}(p2)
                edge node[left,align=center,near start]{$-\delta$}(d1)
                edge [bend left=38]node[right,align=center,near start]{$-1$}(n1)
                edge [bend right=35]node[left,align=center,near start]{$+\varepsilon$}(e1)
        (d0)    edge node[below,align=center]{$+1$}(d1)
                edge [bend right=30]node[left,align=center,very near start]{$-1$}(dm1)
        (d0t)   edge [bend right]node[left,align=center]{}(ed0)
                edge [bend right]node[left,align=center]{}(e0')
                edge [bend right]node[left,align=center]{}(ed)
                edge [bend right=35]node[left,align=center]{}(n1)
        (d1)    edge node[above,align=center]{$+1$}(d2)
        (d1t)   edge [bend left]node[left,align=center]{}(dm0)
                edge [bend left=30]node[right,align=center]{}(dmm)
        (d1t')  edge [bend left]node[left,align=center]{}(ed1)
                edge [bend left]node[left,align=center]{}(e1')
                edge [bend left]node[left,align=center]{}(ed)
                edge [bend left=25]node[left,align=center]{}(n1)
        (e0t)   edge [bend right]node[left,align=center]{}(ed0)
                edge [bend right]node[left,align=center]{}(e0')
                edge [bend right]node[left,align=center]{}(ed)
                edge [bend right=27]node[left,align=center]{}(n1)
        (e1t)   edge [bend left]node[left,align=center]{}(ed1)
                edge [bend left]node[left,align=center]{}(e1')
                edge [bend left]node[left,align=center]{}(ed)
                edge [bend left=20]node[left,align=center]{}(n1)
        (ed0)   edge [bend right]node[left,align=center]{$-\delta$}(e0')
                edge [bend right=30]node[left,align=center]{}(dmm)
        (ed1)   edge [bend right]node[left,align=center]{$-\delta$}(e1')
                edge [bend right=30]node[left,align=center]{}(dm0)
        (e0')   edge [bend right=20]node[left,align=center]{$-1$}(dm1)
        (e0't)  edge [bend right=25]node[left,align=center]{}(edmm)
                edge [bend left=20]node[left,align=center]{}(n1)
        (e1't)  edge [bend left=20]node[left,align=center]{}(edm0)
                edge [bend left=20]node[left,align=center]{}(n1)
        (e1't') edge [bend right=20]node[left,align=center]{}(dmm)
                edge [bend right=20]node[left,align=center]{}(dm0)
        (ed)    edge [bend left=50]node[right,align=center,very near start]{$-\delta$}(n1)
        (dmm)   edge node[above,align=center]{$-1$}(dm1)
                edge [bend left]node[left,align=center]{$+\varepsilon$}(edmm)
                edge [bend left=5]node[left,align=center]{}(n1)
        (dm0)   edge [bend right=40]node[above,align=center,near start]{$-1$}(dm1)   edge [bend right]node[right,align=center]{$+\varepsilon$}(edm0)
                edge [bend right=22]node[right,align=center]{}(n1)
        (edmm)  edge node[below,align=center]{$-1$}(edm1)
                edge node[above,align=center]{$+\varepsilon$}(n1)
        (edm0)  edge [bend left=40]node[below,align=center,near end]{$-1$}(edm1)
                edge node[above right,align=center]{$+\varepsilon$}(n1)
        (n1)    edge node[above,align=center]{$-1$}(n2)
        ;
        \end{tikzpicture}
        \caption{The $i^\text{th}$ slice of the PDA $\mathcal{P}$ constructed for $k$-reachability of a C1PVASS. The slice itself is inside the dashed box, the text to the right provides intuition for the layer. The subscript being $i$ for $0\leq i\leq m$ of a block (eg, $i$ in \cgreencolorA{i}{}) denotes that all the states in the block have lower bounds at most $\ell_i$. Note $+0$ is omitted unless it is the only option for transitions in the block.}
    \label{fig:C1PVASS-reach-to-PDA}
\end{figure*}

In this section, we show that $k$-reachability for $k\in\mathbb{N}$ is also in \textsc{NP} for C1PVASS. Unlike in 0-C1PVASS, $k$-coverability does not imply $k$-reachability in C1PVASS, since scaling down the vectors along the entire run can lead to some lower bounds being violated. For instance, consider a C1PVASS with 3 states, namely, $s_0$, $s_1$ and $f$ with $\ell(s_1)=1$ and a $+1$ update on both $s_0\rightarrow s_1$ and $s_1\rightarrow f$. For any accepting run $\pi$, the counter value at $s_1$ must be $1$. This means that even if the second $+1$ update is scaled down, the counter value at $f$ must be strictly greater than $1$. Hence, for this C1PVASS, $1$-coverability holds but $1$-reachability does not.

Like in the previous section, we construct a PDA and a Presburger formula such that the PDA accepts a word that satisfies the Presburger formula iff the C1PVASS has a $k$-reaching run, for some given $k\in\mathbb{N}$. However, the construction is not as straightforward as in the previous section since it is not always possible to reach a specific counter value by scaling down all negative counter updates to arbitrarily small numbers. Instead, we first introduce a normal form of scaled runs (i.e. the sequences of coefficients) that guides our construction a PDA with no block cycles in the same way \cref{lem:extremal_runs} guided our construction for reachability.

\subsubsection{The Dense Normal Form (DNF)}
We show that all $k$-reaching runs have a normal form which scales the counter updates in the run so that the positive updates are concentrated towards the start of the run and the negative updates towards the end of the run. Formally, let $\pi$ be a run in the C1PVASS which reaches the counter value $k\in\mathbb{N}$. Let $P_\pi$ be the sum of all positive updates in $\pi$ and $N_\pi$ be the sum of all negative updates. Define $I_P^\pi=\lfloor P_\pi\rfloor$, $F_P^\pi=P_\pi-I_P^\pi$, $I_N^\pi=\lceil N_\pi\rceil$ and $F_N^\pi=I_N^\pi-N_\pi$. Clearly $I_P^\pi + F_P^\pi + F_N^\pi + I_N^\pi = k$ and, moreover, $I_P^\pi + I_N^\pi = k$ and $F_P^\pi = - F_N^\pi$ since $k$ is an integer. Our intention, to define a normal form and argue all runs can be put in it, is to scale the first $I_P^\pi$ positive updates and the last $I_N^\pi$ negative updates along the run in full, i.e. $\gamma = 1$. The remaining positive and negative updates can be scaled arbitrarily (small) as long as their sum adds up to $0$. For the latter, we will scale down positive and negative updates by coefficients from $E$ and $D$ respectively, where $E$ and $D$ are finite sets of arbitrarily small ``epsilons'' ($\varepsilon$) and ``deltas'' ($\delta$).

\begin{lemma}\label{thm:normal-form}
    Let $\pi$ be a run in the C1PVASS which reaches the counter value $k\in\mathbb{N}$. Then, there exists a run $\pi'$ such that:
    \begin{enumerate}
        \item\label{item:normalform-1} The sequence of states in $\pi'$ is the same as in $\pi$, and
        \item\label{item:normalform-2} $\pi'$ is also a $k$-reaching run.
        \item\label{item:normalform-3} All positive and negative updates in $\pi'$ are scaled from the set $\Set{1}\cup E$ and $\Set{1}\cup D$, respectively.
        \item\label{item:normalform-4} The sequence of all nonzero updates in $\pi'$ is of the form $(\Set{+1}\cup -D)^*(-D\cup+E)^*(\Set{-1}\cup+E)^*$. 
        \item\label{item:normalform-5} Let $\restr{\pi'}{E \cup D}$ be the sequence of counter updates restricted to $+E$'s and $-D$'s. Then,  $\restr{\pi'}{E \cup D}$ is of the form:
        \begin{itemize}
            \item \label{item:type1} $(-D)^*(+E)(+E \cup -D)^*(-D)$, in which case, for any proper prefix of $\restr{\pi'}{E \cup D}$ with at least one epsilon, the sum of all epsilons and deltas is positive; or
            \item \label{item:type2} $(+E \cup -D)^*(-D)( +E)^+$, and for any proper prefix of $\restr{\pi'}{E \cup D}$ that contains an epsilon and not the final delta, the sum of all epsilons and deltas is positive while for any  proper prefix with the final delta, the sum is negative.
        \end{itemize}
        \item\label{item:normalform-6} For configurations $q=(s,\alpha,c)$ in $\pi$ and $q'=(s,\alpha,c')$ in $\pi'$ which occur in the same position in both runs, $c'\geq\lfloor c\rfloor$.
    \end{enumerate}
\end{lemma}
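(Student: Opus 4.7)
The proof will be by explicit construction of $\pi'$: we retain the state sequence (hence the stack evolution) of $\pi$ and only reassign the scaling coefficients on transitions. A short arithmetic check comes first. From $c_n = P_\pi + N_\pi = k \in \mathbb{N}$ together with $P_\pi = I_P^\pi + F_P^\pi$ and $N_\pi = I_N^\pi - F_N^\pi$ (fractional parts in $[0,1)$), we obtain $F_P^\pi = F_N^\pi$ and $I_P^\pi + I_N^\pi = k$. There is therefore exactly enough ``integer budget'' to promote $I_P^\pi$ positive updates to fully $+1$ and $|I_N^\pi|$ negative updates to fully $-1$, while the remaining positives and negatives absorb the fractional totals $F_P^\pi$ and $F_N^\pi$ respectively.

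The construction itself is greedy. Enumerating the positive-update transitions of $\pi$ in run order, set the coefficient of the first $I_P^\pi$ of them to $1$; enumerating the negative-update transitions in reverse order, set the coefficient of the rightmost $|I_N^\pi|$ of them to $1$; the remaining positives and negatives receive small coefficients drawn from finite sets $E$ and $D$ whose chosen values sum to $F_P^\pi$ and $F_N^\pi$ respectively. Items~1--3 of the lemma are then immediate. Item~4 follows by identifying two cut-points in the transition sequence of $\pi'$, the position of the $I_P^\pi$-th positive and that of the first full $-1$: before the former only $\{+1,-D\}$ updates appear, strictly between only $\{-D,+E\}$ appear, and after the latter only $\{-1,+E\}$ appear. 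Item~5 then classifies the overall pattern of $\restr{\pi'}{E\cup D}$ by whether the last $\pm$-fractional update is a $-D$ (case~(a)) or a nonempty tail block of $+E$s follows a specific $-D$ (case~(b)); the prefix sign conditions encode the requirement that between consecutive ``full'' moments in $\pi'$ the $\pm\varepsilon$-oscillation around the relevant integer benchmark stays strictly below $1$ in absolute value, which can be arranged by appropriately interleaving and sizing the fractional coefficients drawn from $E$ and $D$.

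The main obstacle is item~6, the invariant $c'_i \geq \lfloor c_i \rfloor$ at every position~$i$; note this implies validity of $\pi'$ as a run because the lower-bound guards are integers. I plan to prove the invariant phase by phase. In the first phase (no full $-1$ has fired yet and the $+1$-budget is not exhausted), writing $j_P^{(i)}$ for the number of positive updates in $\pi$ up to~$i$, all such positives are full in $\pi'$, so $c'_i \geq j_P^{(i)} - F_N^\pi > j_P^{(i)} - 1$; combined with $c_i \leq j_P^{(i)}$ (with equality forcing $N^{(i)} = 0$ and hence $c'_i = c_i$ as well), this yields $c'_i \geq \lfloor c_i \rfloor$. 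In the middle phase (all full positives fired, no full negative yet), $c'_i$ stays in an interval of length less than $1$ around the integer $I_P^\pi$, while $c_i \leq P^{(i)} \leq P_\pi < I_P^\pi + 1$ forces $\lfloor c_i \rfloor \leq I_P^\pi$; the prefix-sum conditions of item~5 are exactly what ensures $c'_i \geq I_P^\pi$ at the positions where $\lfloor c_i \rfloor = I_P^\pi$. The last phase is symmetric to the first. The bulk of the technical work lies in this phase-by-phase invariant, together with verifying that the fractional coefficients in $E$ and $D$ can be selected so as to simultaneously satisfy the prefix-sum conditions of item~5 and the floor invariant of item~6.
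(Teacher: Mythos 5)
Your overall strategy---keep the state sequence, promote the first $I_P^\pi$ positive updates and the last $|I_N^\pi|$ negative updates to full scale, and let the remaining updates carry the (equal) fractional masses from $E$ and $D$---is the same as the paper's, and your arithmetic setup and the phase-by-phase plan for item~6 match its intent. But there is a genuine gap in your treatment of item~4: the greedy assignment does \emph{not} guarantee that every full $+1$ precedes every full $-1$, and your cut-point argument silently assumes that it does. Consider a run whose nonzero updates are $+1,-1,+1$, all with coefficient $1$, reaching $k=1$: here $P_\pi=2$, $N_\pi=-1$, so $I_P^\pi=2$ and $|I_N^\pi|=1$, and your rule promotes all three updates to full scale. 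The resulting sequence $+1,-1,+1$ is not of the form $(\{+1\}\cup -D)^*(-D\cup +E)^*(\{-1\}\cup +E)^*$, and the claim that ``before the $I_P^\pi$-th positive only $\{+1,-D\}$ updates appear'' is false (a full $-1$ sits strictly before the second full $+1$). The paper flags exactly this obstruction and resolves it with a rebalancing step: whenever a full $-1$ precedes a full $+1$, both are demoted to a matched fractional pair $-\delta,+\varepsilon$ with $\delta=\varepsilon$. This preserves the final counter value, never decreases any intermediate counter value (positions between the two updates gain $1-\delta$, positions after them are unchanged), and strictly decreases the number of such inversions, so iterating it yields the ordering your cut-point decomposition needs. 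Without this step (or an equivalent, order-aware choice of which updates to promote), items~4 and~5 and the three-phase structure underlying your item~6 argument cannot be set up.

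A secondary remark: the selection of the actual values in $E$ and $D$ so that the prefix-sum conditions of item~5 and the floor invariant of item~6 hold simultaneously is deferred in your proposal as ``the bulk of the technical work''; the paper is equally terse on this point, so I do not count it as a divergence, but note that your phase-1 bound $c'_i>j_P^{(i)}-1$ only yields $c'_i\geq\lfloor c_i\rfloor$ via the integrality side-argument you sketch, and the analogous care is needed in the middle and final phases.
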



Note that the DNF (in particular
\cref{item:normalform-4}) precludes $-1$ updates before a $+1$ update. Indeed, if this ever happens then one can scale both updates down while preserving the final counter value to obtain a run in DNF. The last technical hurdle in transforming a run into DNF is ensuring \ref{item:normalform-5}. That epsilons and deltas exist as required in both cases should be clear. Finally, convincing oneself that the two cases are exhaustive is easy since there is a delta iff there is an epsilon in the original run because otherwise the fractional parts cannot cancel each other out.

\begin{remark}\label{rem:eps-iff-del}
    A run $\pi$ of a C1PVASS that reaches an integral counter value has an $+\varepsilon$ update if and only if it has a $-\delta$ update.
\end{remark}

\begin{example}
    Let the sequence of nonzero updates in a 
    $1$-reaching run be $+1,+0.8,-0.9,-0.9,+1,-1,+1$. This run is not in the dense normal form. To transform it into the normal form, we choose different values of $\gamma$ to scale the updates along the run to obtain $+1,+1,-\delta_1,-\delta_2,+\varepsilon_1,-1,+\varepsilon_2$. It is easy to see that this is also a $1$-reaching run and the integer parts of the counter values along the run are at least those along the original run.
\end{example}

Since we know that if a $k$-reaching run exists, then a $k$-reaching run in dense normal form exists, we construct a PDA and a Presburger formula which simulate runs of the C1PVASS in DNF.

\subsubsection{Constructing the PDA $\mathcal{P}$}
The construction of $\mathcal{P}$ is shown in \cref{fig:C1PVASS-reach-to-PDA}. Each slice consists of 9 layers, namely $L_1,\dots,L_9$, each of which contains one or two blocks. Each block, like in \cref{sec:cover-C1PVASS} is a copy of the C1PVASS $\mathcal{A}$ restricted to states with lower bounds at most $\ell_i$, where $i$ is the subscript of the block, and transitions with updates written inside the block. The transitions labelled $+\varepsilon$ and $-\delta$ correspond to $+1$ and $-1$ update transitions respectively, but we label them differently to make the explanation later easier to read. Each block also has all transitions from $\mathcal{A}$ with a $+0$ update. Note that the connections between blocks allow exactly the transitions on the labels and not the $+0$ transitions from $\mathcal{A}$. 

In the figure, we are assuming that $\ell_{i-1}<\ell_i$. If $\ell_{i-1}=\ell_i-1$, there will be the following changes:
\begin{enumerate}
    \item All transitions entering \rredcolorA{i-1}{-} will now go to \rredcolorA{i-2}{-} instead (both in $L_7$), and
    \item All transitions entering \rredcolorI{i-1}{-} will now go to \rredcolorI{i-1}{} instead.
\end{enumerate}
Checking whether $\ell_{i-1}=\ell_i-1$ can be done beforehand for all $0<i\leq m$, and $\mathcal{P}$ can be constructed accordingly. The discussion that follows still holds.

 Due to the complexity of the PDA, providing a formal proof like in \cref{sec:cover-C1PVASS} is not feasible. Rather, we provide intuition behind the construction of the PDA and the Presburger formula, and the reader should convince themselves that the construction is correct.

\begin{remark}
    By construction, there are no block cycles in $\mathcal{P}$. That is, once a run exits a block, it cannot enter the same block later. This follows due to the simple observation that, from every block, there are transitions only to a block either to the left or to the right on the same level (but never both), or a block in a lower layer.
\end{remark}

Note that, since there is no way to enter a block in an upper layer from a lower one, and a run always starts from \rgreencolorI{0}{} (the leftmost green block), any run will first traverse green blocks, moving to the right, then it will either enter a blue block and move down or directly enter a red block and start moving to the left. 

\begin{lemma}\label{lem:C1PVASS_reach_run_split}
    The sequence of states in any run in $\mathcal{P}$ is a sequence of states in green blocks, followed by a (possibly empty) sequence of states in a red block, followed by a (possibly empty) sequence of states in red blocks. Furthermore, while the run is visiting green, blue and red blocks, the index of the slices is non-increasing, constant and non-increasing, respectively.
\end{lemma}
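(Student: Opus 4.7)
The plan is to prove both parts of \cref{lem:C1PVASS_reach_run_split} by a direct case analysis on the transitions of $\mathcal{P}$, organised around the construction shown in \cref{fig:C1PVASS-reach-to-PDA}. The first task is to establish the three-phase colour structure, and the second is the slice-monotonicity claim.

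For the phase structure, I would verify two ``one-way'' properties by inspecting every outgoing transition in the figure: (i) no transition in $\mathcal{P}$ leads from a blue block back into a green block, and (ii) no transition leads from a red block back into either a blue or a green block. These two facts alone suffice, because every run of $\mathcal{P}$ starts in the leftmost green block $\rgreencolorI{0}{}$, so once a run leaves the green colour class it can never re-enter it, and similarly once it leaves the blue class it cannot return. Any run is therefore uniquely partitioned into a (possibly empty) green prefix, a (possibly empty) blue middle, and a (possibly empty) red suffix.

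For the slice-monotonicity claim, I would proceed colour by colour. Within the green blocks of layers $L_1$ and $L_2$, every intra-green transition is either horizontal within a layer (from a block of slice $i$ to a block of slice $i$ or $i+1$) or vertical between $L_1$ and $L_2$ of a fixed slice; the figure shows no intra-green transition in the direction opposite to the one required by the lemma. Within the blue blocks of layers $L_3$--$L_6$, the construction only connects blue blocks of the same slice $i$ to one another, which gives the required constancy of the slice index. Within the red blocks of layers $L_7$--$L_9$, the inspection is symmetric: red-to-red transitions either stay in place or decrease the slice by one (they correspond to $-1$ updates of $\mathcal{A}$), yielding the non-increasing behaviour.

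The main obstacle is not conceptual but rather the sheer volume of edges to check in \cref{fig:C1PVASS-reach-to-PDA}. To keep the bookkeeping manageable, I would group the cases by the update label on the transition ($+1$, $-1$, $+\varepsilon$, $-\delta$, $+0$) rather than block by block, since the construction is deliberately designed so that each update kind triggers a specific colour transition and a specific slice change. Once these five uniform schemata are verified on a single slice $i$, the lemma follows by noting that the construction is identical across slices.
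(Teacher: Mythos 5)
Your proposal matches the paper's treatment: the paper disposes of this lemma with the single line ``The proof follows by construction,'' relying on exactly the edge-by-edge inspection of \cref{fig:C1PVASS-reach-to-PDA} that you outline (no blue-to-green and no red-to-blue/green transitions, blue-to-blue transitions confined to a single slice, red-to-red transitions staying put or moving one slice to the left). One caveat: in the green phase the slice index is in fact \emph{non-decreasing} (runs move right, from slice $i$ to $i+1$, as the paper's own surrounding text states), so the lemma's ``non-increasing'' for green is evidently a typo, and your phrase ``the direction required by the lemma'' should be read with that correction in mind rather than literally.
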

The proof follows by construction.

We now show how the PDA $\mathcal{P}$ is split into 3 main components and the letters read on the transitions in slice $i$.
\begin{itemize}
    \item The \textcolor{green}{green} component consists of the first two layers. This component corresponds to the $(+1\cup -D)^*$ part of the run in \cref{item:normalform-4}. This is easy to see since this component looks exactly like \cref{fig:C1PVASS-lb-to-PDA}, with the exception that the states in the second layer (which were the states in blue and red states in \cref{fig:C1PVASS-lb-to-PDA}), are not accepting. The PDA reads alphabet $a_i$ on all $+1$ transitions in and to green blocks in \cref{fig:C1PVASS-reach-to-PDA}, except the transitions entering the next slice, on which it reads $a_i'$, and $d_i$ on all the $-\delta$ transitions
    \item The \textcolor{blue}{blue} component consists of layers 3, 4, 5 and 6 which correspond to the $(+E\cup-D)^*$ part of the run. This component is entered after reading the first $+\varepsilon$ update. Due to \cref{item:normalform-5}, the run stays in a single slice during this part of the run. Note that the run cannot go back to a green block after entering a blue block. The empty letter $\epsilon$ is read on all transitions in and to this component.
    \item The \textcolor{red}{red} component, consisting of the last 3 layers, corresponds to the $(-1\cup+E)^*$ part of the run, and is entered after the first $-1$ or the last fractional update. Note that the run can never exit this component once it is entered. The PDA reads $b_i'$ on all the $-1$ transitions in and to $L_7$, $b_i$ on all transitions in and to $L_8$ and $L_9$, and $\epsilon$ on all other transitions.
\end{itemize}

Now that we have established what the 3 components are, we provide intuition on blocks and transitions within each layer in slice $i$.
\begin{itemize}
    \item[\textcolor{green}{$L_1$}:] Since this layer works similarly to \cref{fig:C1PVASS-lb-to-PDA} on $+1$ and $-\delta$ updates, we only study what it does on $+\varepsilon$ and $-1$ updates. When a $+\varepsilon$ update is read on this layer, the run enters layer 3, since there have only been $+1$ updates before. If the run sees a $-1$ update in this layer, it must enter the red component, and due to \cref{rem:eps-iff-del}, it cannot see any $+\varepsilon$ edges in the red component either, so it enters the last layer.
    \item[\textcolor{green}{$L_2$}:] Similar to layer 1, we only study what the run does on $+\varepsilon$ and $-1$ updates from this layer. On seeing a $+\varepsilon$ update, it enters $L_4$ if the run is of the second type in \cref{item:normalform-5} and the run still has to see more deltas, $L_5$ if again, the run is of the second type in \cref{item:normalform-5} and a $-\delta$ from the green component is the last delta in the run, $L_9$ if there are no more fractional updates, and $L_6$, if the run is of the first type in \cref{item:normalform-5}.
    \item[\textcolor{blue}{$L_3$}:] When the run is in this layer, the run must have only seen $+\varepsilon$ and $+1$ update before. The run cannot immediately read a $-1$ immediately from this block since, by \cref{rem:eps-iff-del}, there must be a $-\delta$ update later in the run and that cannot happen after a $-1$ update since the run is in DNF. On reading a delta, similar to $L_2$, the run enter $L_4$, $L_5$, $L_6$ or $L_9$ depending on what type of a run it is, and whether the $-\delta$ was the last delta or the last fractional update in the run.
    \item[\textcolor{blue}{$L_4$}:] On the last $-\delta$ read after entering this block (which there must be, since otherwise the run would have entered $L_5$ instead of $L_4$), the run enters $L_5$ or $L_7$, which, as we will see, ensure that the the last fractional update in the run is $+\varepsilon$.
    \item[\textcolor{blue}{$L_5$}:] This layer is entered when at least one $-\delta$ is seen before, and it ensures that the only (and at least one) fractional updates seen later are $+\varepsilon$. On a $-1$ update, the run enters $L_7$, which ensures this property too. On a $+\varepsilon$ update, it enters $L_8$ if there are more $+\varepsilon$ updates to come, or $L_9$ if this was the last $+\varepsilon$ update of the run.
    \item[\textcolor{blue}{$L_6$}:] The run enters this layer only if there was at least one $+\varepsilon$ and one $-\delta$ updates before, and it ensures that the last fractional update will be a $-\delta$ by only going to $L_9$ on the last $-\delta$, hence, avoiding all epsilon updates that could occur in the $(-1\cup +E)^*$ portion of the run.
    \item[\textcolor{red}{$L_7$}:] The run enters this layer after seeing at least one $-\delta$ update previously, and the only way to reach an accepting state is by seeing a $+\varepsilon$ and entering $L_8$ or $L_9$.
    \item[\textcolor{red}{$L_8$}:] All runs that enter this layer have seen at least one $+\varepsilon$ and one $-\delta$ updates before. The run stays in this layer on seeing $+\varepsilon$ and $-1$ updates, until it enters $L_9$ on the last $\varepsilon$.
    \item[\textcolor{red}{$L_9$}:] The run enters this layer after having seen all the fractional updates (if any), and only sees $-1$ updates henceforth. 
\end{itemize}

\subsubsection{The Presburger formula $\varphi$}

The main intuition for having the Presburger formula is to make sure that the run stays in the correct block. For example, on reading a $+1$ in \rgreencolorI{i}{+}, there is a choice to either stay within the block or move to \rgreencolorI{i+1}{}. The formula ensures that it stays in \rgreencolorI{i}{+} when the counter value is in the interval $(\ell_i,\ell_{i+1})$ and moves to \rgreencolorI{i+1}{} when the counter value reaches $\ell_{i+1}$.

We will also use a formula to ensure that the sum of all the $+1$ and $-1$ updates is exactly $k$.

Using \cref{lem:C1PVASS_reach_run_split}, we are able to split the Presburger formula into 3 conjuncts as well.

For the \textcolor{green}{green} component, note that the PDA restricted to the first 2 layers looks exactly like \cref{fig:C1PVASS-lb-to-PDA}, except the accepting states in the second layer are not accepting, but this is only because the counter value is not an integer in this layer. The Presburger formula for these also is similar to the one for the coverability PDA.
\begin{equation}\label{formula:reach-green}
    \begin{aligned}
        \varphi_G=\bigwedge_{k=1}^m\left(
            \begin{aligned}
                \big(\#_{a_{k-1}'}&=0\big)\vee\\
                \bigg(\Big(\sum_{i=0}^{k-1}\#_{a_i}+\#_{a_i'}\geq \ell_k\Big)&\wedge\Big(\sum_{i=0}^{k-1}\#_{d_i}=0\Big)\bigg)\vee\\
                \bigg(\Big(\sum_{i=0}^{k-1}\#_{a_i}+\#_{a_i'}>\ell_k\Big)&\wedge\Big(\sum_{i=0}^{k-1}\#_{d_i}>0\Big)\bigg)
            \end{aligned}
        \right)
    \end{aligned}
\end{equation}

For the \textcolor{blue}{blue} component in a slice $i$, the fact that the only updates here are $+\varepsilon$ and $-\delta$ means that the counter value never goes below $\ell_i$ in the blue blocks in slice $i$, unless the run is of the second type in \cref{item:normalform-5} and the run has seen exactly $\ell_i$ many $+1$ updates before. In which case, on seeing the final delta, it enters a copy of a blue block from the slice $i-1$ (the left block in $L_5$). Hence, one can check that the run always satisfies the lower bounds of the states in the blocks when in the blue component.

For the \textcolor{red}{red} component, we look at $L_7$ and $L_8$ first. When entering blocks in these layers, an accepting run will have seen at least one $-\delta$ before, and must have at least one $+\varepsilon$ yet to be seen. Which means that the counter value here is slightly lower than the total number of $+1$ updates subtracted by the number of $-1$ updates seen before. The run reads $b_i'$ when entering and in this block, which gives us the following formula.
\begin{equation}
    \begin{aligned}
        \varphi_R'=\bigwedge_{k=1}^m\left(                      \Big(\#b_i'=0\Big)\vee\bigg(\sum_{i=1}^{m}\#_{a_i}+\#_{a_i'}-\sum_{j=k}^{m}\#_{b_j'}>\ell_k\bigg)
        \right)
    \end{aligned}
\end{equation}
The first conjunct considers the case where the run never enters slice $k$. When it does, the second conjunct ensures the number of $+1$ updates seen before is strictly greater than the number of $-1$ updates seen before entering these layers.

We now consider $L_9$. To enter these blocks, either no epsilons and deltas have been seen before, or all $+\varepsilon$ and $-\delta$ updates have already been seen. Thus, using the fact that the sum of all fractional updates is exactly $0$, the counter update in the states in these blocks will be exactly the total number of $+1$ updates subtracted by the number of $-1$ updates seen before. The run reads $b_i$ when entering and in this block, which gives us the following formula.
\begin{equation}
    \begin{aligned}
        \varphi_R=\bigwedge_{k=1}^m\left(                      \Big(\#b_i=0\Big)\vee\bigg(\sum_{i=1}^{m}\#_{a_i}+\#_{a_i'}-\sum_{j=k}^{m}\#_{b_j'}+\#_{b_j}\geq\ell_k\bigg)
        \right)
    \end{aligned}
\end{equation}
Like the previous case, the first conjunct checks if the run enters the last layer in the $k^\text{th}$ slice, if so, the second conjunct ensures that the counter update, which is the number of $b$'s subtracted by the number of $a$'s here, is at least the lower bound on that block.

Finally, we have a formula to ensure that the total number of $b$'s subtracted by the number of $a$'s (i.e., sum of all the $+1$ and $-1$ updates in $\mathcal{A}$) in the run is exactly $k$.
\begin{equation}\label{eqn:pres-reach}
    \begin{aligned}
        \varphi_k=\sum_{i=0}^m \#_{a_i}+\#_{a_i'}-\#_{b_i}-\#_{b_i'}=k
    \end{aligned}
\end{equation}

The final formula $\varphi$ will be a conjunction of all the formulas described above.

\begin{equation*}
    \varphi=\varphi_G\wedge\varphi_R'\wedge\varphi_R\wedge\varphi_k
\end{equation*}

Now, with the PDA $\mathcal{P}$ and formula $\varphi$, we get the main result of this section.

\begin{theorem}\label{thm:C1PVASS+lb_reach_to_PDA+Pres}
    For all runs $\pi$ in $\mathcal{A}$, there is one in $\mathcal{P}$ with the same sequence of states whose Parikh image satisfies the Presburger formula $\varphi$ from \cref{eqn:pres-reach} if and only if $\pi$ is $k$-reaching in $\mathcal{A}$.
\end{theorem}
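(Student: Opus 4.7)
My plan is to prove the two directions separately, following the template of \cref{thm:C1PVASS+lb_to_PDA+Pres} but leveraging \cref{thm:normal-form} (DNF) and \cref{lem:C1PVASS_reach_run_split} to handle the substantially more intricate block structure. Throughout, I will think of any run of $\mathcal{P}$ as partitioned into a green prefix (through $L_1,L_2$), a blue middle part (through $L_3,L_4,L_5,L_6$) contained in a single slice $i^\star$, and a red suffix (through $L_7,L_8,L_9$), as granted by \cref{lem:C1PVASS_reach_run_split}. I will write $I_P, F_P, F_N, I_N$ for the integer/fractional decompositions of the total positive/negative mass from \cref{thm:normal-form}.

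For the direction $(\Longleftarrow)$, I take an accepting run of $\mathcal{P}$ whose Parikh image satisfies $\varphi$ and reconstruct a $k$-reaching run of $\mathcal{A}$ on the same sequence of states. Each $+1$ in $\mathcal{A}$ read as $a_i$ or $a_i'$ in $\mathcal{P}$ is kept at full scale, while each $-1$ read as $b_i$ is also kept at full scale; the remaining $+1$'s (read as letters inside blue or red blocks using $+\varepsilon$) and $-1$'s (read as $d_i$ or in $L_4,L_5,L_6,L_8$ using $-\delta$) are scaled by arbitrarily small coefficients from $E,D$ so that their total contributes $0$. The formula $\varphi_k$ ensures the final counter is exactly $k$; $\varphi_G$ is copied verbatim from \cref{thm:C1PVASS+lb_to_PDA+Pres} and guarantees lower bounds along the green prefix; and $\varphi_R',\varphi_R$ enforce that whenever a state of slice $k$ is visited in $L_7,L_8$ (resp.\ $L_9$), the running signed sum of $+1$s minus $-1$s strictly exceeds (resp.\ meets) $\ell_k$, which together with small enough $\varepsilon,\delta$ yields the state's lower bound. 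The blue phase is automatic since it stays in a single slice $i^\star$ whose states all have lower bound at most $\ell_{i^\star}$, already met when entering $L_3$ via $\varphi_G$.

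For the direction $(\Longrightarrow)$, I start with any $k$-reaching run and use \cref{thm:normal-form} to put it in DNF. Then I inductively build the matching run of $\mathcal{P}$, at each configuration mapping the current state of $\mathcal{A}$ to the copy in the block whose label matches the current position in the DNF decomposition $(\{+1\}\cup -D)^*(-D\cup +E)^*(\{-1\}\cup +E)^*$. Concretely, the green prefix corresponds exactly to the first factor and is handled by the same case analysis as in \cref{thm:C1PVASS+lb_to_PDA+Pres}, placing the run in $\mathcal{I}_i^{}$ or $\mathcal{I}_i^+$ according to whether the current counter equals $\ell_i$ or lies strictly between $\ell_i$ and $\ell_{i+1}$. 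On the first $+\varepsilon$ or first $-1$ update I switch to the blue or red component; the blue middle then traces through $L_3$--$L_6$ following the case split in \cref{item:normalform-5} of DNF, and \cref{item:normalform-6} ensures that the integer-part bookkeeping used by the Presburger formula is tight enough to keep the lower bounds satisfied. The red suffix is then produced by routing $-1$ updates through $L_7,L_8,L_9$ while dispatching any remaining $+\varepsilon$'s through the appropriate blue-to-red jumps, ending in $L_9$ since DNF guarantees no fractional update occurs after the last $-1$.

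I expect the main obstacle to be the bookkeeping across layer transitions, in particular certifying that for the two subcases of \cref{item:normalform-5} the correct layer is chosen so that $\varphi_R'$ versus $\varphi_R$ is activated with the right slackness: $L_7,L_8$ demand a \emph{strict} inequality (because an $+\varepsilon$ still lies in the future, so the counter is slightly below an integer) while $L_9$ demands a non-strict one (because all fractional updates are exhausted, so the counter is exactly an integer). The proof therefore devolves into a case analysis mirroring the intuition paragraphs accompanying $L_1$--$L_9$, and the delicate part is to verify, for each of the nine layers and each outgoing transition type in $\{+1,-1,+\varepsilon,-\delta,+0\}$, that the target block chosen matches both the sign-sum arithmetic enforced by $\varphi_G \wedge \varphi_R' \wedge \varphi_R \wedge \varphi_k$ and the state's lower bound in $\mathcal{A}$. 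Once this case table is verified, both implications follow by a routine induction on the length of the run.
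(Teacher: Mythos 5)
Your proposal matches the paper's intended argument: the paper itself proves this theorem only by remarking that the proof is ``similar to that of \cref{thm:C1PVASS+lb_to_PDA+Pres}'', simulating $k$-reaching runs of $\mathcal{A}$ as accepting runs of $\mathcal{P}$ satisfying $\varphi$ and vice versa, with the details (the layer-by-layer case analysis guided by \cref{thm:normal-form} and \cref{lem:C1PVASS_reach_run_split}) left to the reader. Your two-directional simulation plan, organized around the DNF decomposition and the green/blue/red components, is exactly that argument, spelled out in somewhat more detail than the paper provides.
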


The proof is similar to that of \cref{thm:C1PVASS+lb_to_PDA+Pres}, simulating the $k$-reaching runs of $\mathcal{A}$ into accepting runs in $\mathcal{P}$ satisfying $\varphi$ and vice-versa. We leave the details to the reader.

\subsection{Boundedness for C1PVASS}

For boundedness, we first note that the set of all reachable counter values from a C1PVASS $\mathcal{A}$ is a subset of all reachable values of the 0-C1PVASS $\mathcal{A'}$, where $\mathcal{A'}$ is obtained by replacing the lower bounds in $\mathcal{A}$ by $0$. It follows, from \cref{thm:zboundedness-ptime}, that $2^{(4|\mathcal{A}|)^9}$ is an upper bound on the largest reachable counter value in $\mathcal{A}$, if $\mathcal{A}$ is bounded, where $|\mathcal{A}|$ is the size of the encoding of $\mathcal{A}$. Hence, we can ask for $k$-coverability with $k=2^{(4|\mathcal{A}|)^9}$ and if the answer is yes, the set of all reachable counter values is unbounded. Hence, boundedness is in \textsc{coNP}.

\begin{theorem}
    Deciding boundedness of a C1PVASS is in \textsc{coNP}.
\end{theorem}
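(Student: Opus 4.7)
The plan is to prove boundedness in \textsc{coNP} by showing its complement, unboundedness, is in \textsc{NP}, via a reduction to a single $k$-coverability query for a $k$ of polynomial bit-size but exponential magnitude.

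First I would establish that if $\mathcal{A}$ is bounded then the largest reachable counter value of $\mathcal{A}$ is at most $K = 2^{(4|\mathcal{A}|)^9}$. Form the 0-C1PVASS $\mathcal{A}'$ by zeroing out every lower-bound guard of $\mathcal{A}$. Since guards only restrict runs, every run of $\mathcal{A}$ is also a run of $\mathcal{A}'$, and so the reachable counter values of $\mathcal{A}$ form a subset of those of $\mathcal{A}'$. Invoking \cref{thm:zboundedness-ptime} on $\mathcal{A}'$, whenever the reachable values of $\mathcal{A}'$ are finite they are bounded by $K$; hence the same ceiling applies to $\mathcal{A}$ whenever $\mathcal{A}$ is itself bounded. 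Crucially, $K + 1$ has a binary encoding whose length is polynomial in $|\mathcal{A}|$.

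Next I would use this ceiling to obtain the characterization that $\mathcal{A}$ is unbounded if and only if $\mathcal{A}$ admits a $(K+1)$-covering run. The forward direction is immediate from the definition of unboundedness: for every threshold there is some covering run, in particular for $K+1$. The reverse direction is the contrapositive of the previous step, since a bounded $\mathcal{A}$ has no accepting run with final counter value strictly exceeding $K$. Combining this characterization with the \textsc{NP} algorithm for $k$-coverability (with $k$ given in binary) established earlier in this section, a nondeterministic polynomial-time machine can guess and verify a satisfying assignment of the associated Presburger formula $\varphi \wedge \varphi_L$ instantiated with $k = K + 1$. Unboundedness therefore lies in \textsc{NP}, and by complementation boundedness is in \textsc{coNP}.

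The only delicate step is the first one, namely transferring the exponential upper bound from $\mathcal{A}'$ to $\mathcal{A}$. This rests on nothing more than the set inclusion of reachable counter values combined with the $0$-C1PVASS boundedness theorem; once the ceiling $K$ is in hand, the reduction to a single coverability query and the application of the earlier \textsc{NP} procedure are both immediate.
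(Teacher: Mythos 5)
Your proposal is correct and follows essentially the same route as the paper: zero out the lower-bound guards, transfer the $2^{(4|\mathcal{A}|)^9}$ ceiling from the resulting 0-C1PVASS via \cref{thm:zboundedness-ptime} using the inclusion of reachable counter values, and reduce unboundedness to a single binary-encoded coverability query answered by the \textsc{NP} procedure of this section. Your choice of threshold $K+1$ is in fact slightly more careful than the paper's use of $K$ itself, which could in principle misclassify a bounded instance whose tight bound equals $K$ and is attained.
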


\section{Conclusion}

In this work we established reachability, coverability, and boundedness are decidable in polynomial time for continuous PVASS in one dimension (C1PVASS). When the model is extended with lower-bound guards for the counter on the states, we proved reachability and coverability are in \textsc{NP} while boundedness is in \textsc{coNP}. No lower bounds for the latter seem to follow from the literature and in fact we conjecture that they are also decidable in polynomial time.
In the direction of using C1PVASS as approximations of PVASS, we posit the most
interesting direction is to add both upper and lower bounds to the values the counter can take (cf. \cite{coca}) towards
an approximation of one-counter pushdown automata.


\bibliographystyle{ACM-Reference-Format}
\bibliography{refs}



\end{document}